\documentclass{llncs}

\usepackage[linesnumbered,ruled,noend]{algorithm2e}
\usepackage{amssymb}
\usepackage{amsmath}
\usepackage{color}
\usepackage{enumerate}
\usepackage{graphicx}
\usepackage{lipsum}
\usepackage{listings}
\usepackage{tikz}
\usepackage{url}
\usepackage{wrapfig}
\usepackage{xspace}

\setlength\floatsep{1.25\baselineskip plus 3pt minus 2pt}
\setlength\textfloatsep{1pt}
\setlength\intextsep{1.25\baselineskip plus 3pt minus 2 pt}

\bibliographystyle{plain}

\newenvironment{list1}{\begin{list}{$\bullet$}
		{\topsep 0 pt \parsep 0 pt \partopsep 0 pt \itemsep 0
			pt}}{\end{list}}

\setlength{\textwidth}{13.5cm}
\setlength{\textheight}{21.5cm}
\setlength{\oddsidemargin}{1cm}
\setlength{\evensidemargin}{1cm}


\newenvironment{ProcessStep}[1][htb]
{
	\begin{algorithm}[#1]%
	}{\end{algorithm}}



\newcommand{\comments}[1]{}

\newcommand{\G}{\textbf{G}\xspace}
\newcommand{\F}{\textbf{F}\xspace}
\newcommand{\U}{\textbf{U}\xspace}
\newcommand{\x}{\textbf{X}\xspace}
\newcommand{\CU}{\textbf{W}\xspace}
\newcommand{\GXW}{\textsf{GXW}\xspace}
\newcommand{\sig}[1]{{\small\textsf{#1}\xspace}}
\newcommand{\s}[1]{\textsf{{#1}}}
\newcommand{\A}{\mathcal{A}}
\newcommand{\C}{\mathcal{C}}
\newcommand{\Sys}{\mathcal{S}}
\newcommand{\Map}{\textsf{map}}

\setlength{\textwidth}{15.3cm}
\setlength{\textheight}{22.2cm}
\setlength{\oddsidemargin}{1.1cm}
\setlength{\evensidemargin}{1.1cm}


\title{Structural Synthesis for \GXW Specifications}

\author{}
\institute{}

\author{Chih-Hong Cheng 	\and Yassine Hamza    \and Harald Ruess}

%


\institute{
	fortiss - An-Institut Technische Universit{\"a}t  M\"{u}nchen\\
	Guerickestr. 25, 80805 Munich, Germany\\
	{\tt \{cheng,ruess\}@fortiss.org}, {\tt yassine.hamza@in.tum.de}
}

\begin{document}

\maketitle
\vspace{-5mm}
 \begin{abstract}	
 We define the $\GXW$ fragment of linear temporal logic (\s{LTL}) as the basis for synthesizing
 embedded control software for safety-critical applications.  
 Since $\GXW$ includes the use of a {\em weak-until} operator we are able 
 to specify a  number of diverse programmable logic control (\s{PLC}) problems, 
 which we have compiled from industrial training sets.
 For $\GXW$ controller specifications, we develop a novel approach for 
 synthesizing a set of synchronously communicating actor-based controllers.
 This synthesis algorithm proceeds by means of recursing over the
 structure of $\GXW$ specifications, and generates a set of dedicated and synchronously communicating sub-controllers according to the formula structure.
 In a subsequent step, \s{2QBF} constraint solving identifies and tries to resolve potential conflicts between individual $\GXW$ specifications. 
 This structural approach to $\GXW$ synthesis supports traceability between requirements and the generated control code as mandated by certification 
 regimes for safety-critical software.
 Synthesis for \GXW specifications is in \s{PSPACE} compared to \s{2EXPTIME}-completeness of full-fledged \s{LTL} synthesis. 
 Indeed our experimental results suggest that \GXW synthesis scales 
 well to industrial-sized control synthesis problems with 20 input and output ports and beyond.
 \end{abstract}

\section{Introduction}\label{sec.introduction}
\vspace{-1mm}
Embedded control software in the manufacturing and processing industries is usually developed using specialized programming languages such as ladder diagrams or other IEC 61131-3 defined languages. 
Programming in these rather low-level languages is not only error-prone but also time- and resource-intensive.
Therefore we are addressing the problem of correct-by-construction and automated generation of embedded control software from high-level requirements, which are expressed in a suitable fragment of linear temporal logic.

Moreover, an explicit correspondence between the high-level requirements and the generated control code is essential, 
since embedded control software
is usually an integral part of safety-critical systems such as  supervisory control and data acquisition (\s{SCADA}) systems for controlling critical machinery or infrastructure. 
In particular current industrial standards for safety-related development such as IEC 61508, DO 178C for avionics, and ISO 26262 for automotive applications mandate 
traceability between the control code and it requirements. 
Controllers generated by state-of-the-art \s{LTL} synthesis algorithms and tools such as generalized reactivity(1) (\s{GR(1)})~\cite{anzu,gr1} 
or bounded LTL synthesis~\cite{acacia12,Ehlers11,ScheweF07a}, however, usually do not explicitly support such traceability requirements.
For example, the \s{GR(1)} synthesis tool Anzu generates circuit descriptions in Verilog from
BDDs~\cite{anzu}\@. 

We are therefore proposing a novel approach for synthesizing structured control software.
In essence, the control code is generated by means of structural recursion on the given \s{LTL} formulas. 
Therefore, the structure of the control code corresponds closely to the syntactic structure of the given requirements, and there is a direct correspondence between
controller components and sub-formulas of the specification.

In a first step towards this goal, we identify a fragment of \s{LTL} for specifying the input-output behavior of typical embedded control components. 
Besides the specification of input assumptions, invariance conditions on outputs, and transition-like reactions of the form $\G (\s{input} \rightarrow \x^i  \s{output})$,
this fragment also contains specifications of reactions of the form
$\G (\s{input} \rightarrow \x^i  (\s{output}\, \CU\,\s{release}  ))$, where $\s{input}$ is an LTL formula containing at most $i$ consecutive \x operators (i.e., an LTL formula whose validity is determined by the next~$i$ input valuations). 
The latter reaction formula states that if there is a temporal input event satisfying 
the constraint $\s{input}$, 
then the $\s{output}$ constraint should hold on output events until there is a $\s{release}$ event (or $\s{output}$ always holds)\@. 
The operator $\G$ is  the universal path quantifier,
$\x^i$ abbreviates $i$ consecutive next-steps,
 $\CU$ denotes the {\emph{weak until}} temporal operator, the constraint $\s{output}$ contains no temporal operator, and the subformula $\s{release}$ may contain certain numbers of consecutive next-steps but no other temporal operators. The resulting fragment of \s{LTL} is called \GXW.

So far we have successfully modelled more than $70$ different embedded control scenarios in $\GXW$\@.  
The main source for this set of benchmarking problems are publicly available collections of industrial training 
materials for PLCs (including CODESYS 3.0 and AC500)~\cite{source1blogspot,source2kaftanABB,source3petry}\@.
The proposed \GXW fragment of \s{LTL} is also similar to established requirements templates for 
specifying embedded control software in the aerospace domain, such as \s{EARS}~\cite{ears}\@. 

Previous work on \s{LTL} synthesis (e.g.,~\cite{acacia12,Ehlers11,ScheweF07a,fmcad06,DBLP:conf/cav/ChengHRS14,anzu,gr1,ratsy,gr1robots,ShieldSynthesis})
usually generates gate-level descriptions for the
synthesized control strategies. 
In contrast, we generate control software in an actor
language with high-level behavioral constructs and synchronous dataflow communication between connected actors.
This choice of generating {\em
structured controllers} is motivated by current practice of programming controllers using, say,  Matlab Simulink~\cite{simulink}, continuous function charts (IEC~61131-3), and Ptolemy II~\cite{ptII}, which also supports synchronous dataflow (\s{SDF}) models~\cite{sdf}\@.  
Notice, however, that the usual notions of \s{LTL} synthesis also apply to 
synthesis for \s{SDF}, since the composition of actors in \s{SDF} may also be viewed as Mealy machines with synchronous cycles~\cite{sdfcomposition}\@.

Synthesis of structured controllers from \GXW specifications proceeds in two subsequent phases.
In the first phase, the procedure recurses on the structure of the given \GXW formulas for generating dedicated actors for monitoring inputs events, for generating corresponding control events, 
and for wiring these actors according
to the structure of the given \GXW formulas. 
In the second phase, appropriate values for unknown parameters 
are synthesized in order to realize the conjunction of all given \GXW specifications.
Here we use satisfiability  checking for quantified Boolean formula (\s{2QBF}) 
for examining if there exists such conflicts between multiple \GXW 
specifications. More precisely, existential variables of generated \s{2QBF} problems capture the remaining design freedom when an output variable is not constrained by any trigger of low-level events.  

We demonstrate that controller synthesis for the \GXW fragment is in \s{PSPACE} as compared to the \s{2EXPTIME}-completeness result of 
full-fledged \s{LTL}~\cite{popl89}\@.  Under some further reasonable syntactic restrictions on the \GXW fragment we show that synthesis is in \s{coNP}\@.


An implementation of our \GXW structural synthesis algorithm and application to our benchmark studies demonstrates a substantial speed-up compared to existing \s{LTL} synthesis tools. 
Moreover, the structure of the generated control code in \s{SDF} follows the structure of the given \GXW specifications, and is more compact and, arguably, also more
readable and understandable than commonly used gate-level representations for synthesized control strategies.

The paper is structured as follows. 
We introduce in Section~\ref{sec.formulation} some basic notation for \s{LTL} synthesis, a definition of the~\GXW  fragment of \s{LTL}\@ and \s{SDF} actor systems together with the 
problem of actor-based \s{LTL} synthesis under~\GXW fragment.
Section~\ref{sec.example} illustrates \GXW and actor-based control for such 
specifications by means of an example. 
Section~\ref{sec.algorithms} includes the main technical contributions and describes algorithmic workflow for generating structured controllers from \GXW, together with soundness and complexity results for~\GXW synthesis.
A summary of our experimental results is provided in Section~\ref{sec.evaluation},
and a comparison of \GXW synthesis with closely related work on \s{LTL} synthesis is included in Section~\ref{sec.related}\@.
The paper closes with concluding remarks in Section~\ref{sec.conclusion}\@.

 \vspace{-2mm}
\section{Problem Formulation}\label{sec.formulation}
\vspace{-2mm}

We present basic concepts and notations of \s{LTL} synthesis, and we
define the \GXW fragment of \s{LTL} together with the problem of 
synthesizing actor-based synchronous dataflow controllers for \GXW.


\vspace{-2mm}
 \subsection {LTL Synthesis}

Given two disjoint sets of Boolean variables $V_{in}$ and $V_{out}$,
the {\em linear temporal logic} (\s{LTL}) formulae over $\mathbf{2}^{V_{in}\cup V_{out}}$ is the smallest set such that
(1)  $v \in\mathbf{2}^{V_{in}\cup V_{out}}$ is an \s{LTL} formula, 
(2)  if $\phi_1, \phi_2$ are \s{LTL}-formulae, then so are $\neg \phi_1$, $\neg \phi_2$, $\phi_1 \vee \phi_2$, $\phi_1 \wedge \phi_2$, $\phi_1 \rightarrow \phi_2$, and 
(3)  if $\phi_1, \phi_2$ are \s{LTL}-formulae, then so are  $\G \phi_1$, $\x \phi_1$, $\phi_1 \U \phi_2$\@. 
Given an $\omega$-word $\sigma$, define $\sigma(i)$ to be the $i$-th element in $\sigma$, and define $\sigma^i$ to be the suffix $\omega$-word of $\sigma$ obtained by truncating $\sigma(0)\ldots\sigma(i-1)$. 
The satisfaction relation $\sigma \vDash \phi$ between an $\omega$-word~$\sigma$ and an LTL formula~$\phi$ is defined in the usual way.
The \emph{weak until} operator, denoted \CU, is similar to the \emph{until} operator but the stop condition is not required to occur;
therefore $\phi_1 \CU \phi_2$ is simply defined as $(\phi_1 \U \phi_2) \vee \G \phi_1$\@.
Also, we use the abbreviation~$\x^i \phi$ to abbreviate $i$ consecutive $\x$ operators before $\phi$\@.

A deterministic \emph{Mealy machine} is a finite automaton $\C = (Q, q_0,  \mathbf{2}^{V_{in}}, \mathbf{2}^{V_{out}}, \delta)$, where $Q$ is set of (Boolean) state variables (thus $\mathbf{2}^{Q}$ is the set of states), $q_0 \in \mathbf{2}^{Q}$ is the initial state,  $\mathbf{2}^{V_{in}}$ and $\mathbf{2}^{V_{out}}$ are sets of all input and output assignments defined by two disjoint sets of variables $V_{in}$ and $V_{out}$\@. 
$\delta :=  \mathbf{2}^{Q} \times \mathbf{2}^{V_{in}} \rightarrow  \mathbf{2}^{V_{out}} \times \mathbf{2}^{Q}$ is the transition function that takes
     (1) a state $q\in \mathbf{2}^{Q}$  and (2) input assignment $v_{in} \in \mathbf{2}^{V_{in}}$, and returns (1) an output assignment $v_{out} \in \mathbf{2}^{V_{out}}$ and 
     (2) the successor state $q' \in \mathbf{2}^{Q}$. 
Let $\delta_{out}$ and $\delta_{s}$ be the projection of $\delta$ which considers only output assignments and only successor states. 
Given a sequence $a_0  \ldots a_{k}$ where $\forall i =0\ldots k$, $a_i \in \mathbf{2}^{V_{in}}$, let $\delta_{s}^{k} (q_0, a_0  \ldots a_{k})$ 
abbreviate the output state derived by executing $a_0  \ldots a_{k}$ as an input sequence on the Mealy machine.

Given a set of input and output Boolean variables $V_{in}$ and $V_{out}$, together with an LTL formula $\phi$ on $V_{in}$ and $V_{out}$ 
the \emph{LTL synthesis problem} asks the existence of a controller as a deterministic Mealy machine $\C_{\phi}$ such that, for every input sequence $a= a_0a_1a_2\ldots$, where $a_i \in \mathbf{2}^{V_{in}}$:
	(1) given the prefix $a_0$ produce $b_0 = \delta_{out}(q_0, a_0)$,
	(2) given the prefix $a_0 a_1$ produce $b_1 = \delta_{out}(\delta_{s}(q_0, a_0), a_1)$,
	(3) given the prefix $a_0  \ldots a_{k} a_{k+1}$, produce $b_{k+1} = \delta_{out}(\delta_{s}^{k} (q_0, a_0  \ldots a_{k}), a_{k+1})$, and
	(4) the produced output sequence $b= b_0 b_1\ldots$ ensures that the word $\sigma = \sigma_1 \sigma_2 \ldots$, where $\sigma_i =  a_i b_i \in \mathbf{2}^{V_{in}\cup V_{out}}$, $\sigma \vDash \phi$.

\begin{table}[t]
	\begin{minipage}[b]{0.49\linewidth}\centering
		\centering
		\begin{tabular}{l|l|l}
			ID & Meaning & Pattern \\\hline 
			P1&	Initial-Until		& $\varrho_{out}\,\CU\,\phi^{i}_{in}$  \\
			P2&	Trigger-Until	& $\G (\phi^i_{in} \rightarrow \x^i  (\varrho_{out}\,\CU\, (\varphi^{j}_{in} \vee \rho^0_{out}))  )$ \\
			P3&	If-Then		& $\G (\phi^i_{in} \rightarrow \x^i \varrho_{out})$  \\
			P4&	Iff	& $\G (\phi^i_{in} \leftrightarrow \x^i \varrho_{out})$  \\
			P5&     Invariance  &  $\G (\phi^0_{out})$ \\
			P6&      Assumption &  $\G (\phi^0_{in})$ \\\hline	
		\end{tabular}
		    \vspace{2mm}
			\caption{Patterns defined in \GXW specifications}
			\label{table.specification.pattern}
	\end{minipage}
	\hspace{0.5cm}
	\begin{minipage}[b]{0.45\linewidth}
		\centering
		\begin{tabular}{c|l}
			Pattern ID  & High-level Control Specification \\\hline
			P1& 	$\s{output} \,\CU\, \s{input}$	 \\
			P2&	$\G (\s{input} \rightarrow (\s{output} \,\CU\, \s{release})) $	 \\
			P3&	$\G (\s{input} \rightarrow \s{output})$	 \\\hline
		\end{tabular}
		    \vspace{2mm}
			\caption{Specification patterns and corresponding skeleton specification.}
			\label{table.specification.skeleton}		
	\end{minipage}
\end{table}

\vspace{-2mm}
 \subsection{\GXW Synthesis}

We formally define the \GXW fragment of \s{LTL}\@. 
Let $\phi^i$, $\varphi^i$, $\psi^i$ be \s{LTL} formulae over input variables $V_{in}$ and output variables $V_{out}$\@, where all formulas are (without loss of generality) assumed to be in disjunctive normal form (DNF), and each literal is of form $\x^j\,v$ or $\neg \x^j\, v$ 
with  $0\leq j\leq i$ and $v \in V_{in}\cup V_{out}$\@. 
Clauses in DNF are also called \emph{clause formulae}\@. 
Moreover, a formula $\phi^i_{in}$ is restricted to contain only 
input variables in $V_{in}$,
and similarly, $\phi^i_{out}$ contains only output variables in $V_{out}$\@.
Finally,  $\varrho_{out}$ denotes either $v_{out}$ or  $\neg v_{out}$, where $v_{out}$ is an output variable.

	For given input variables  $V_{in}$ and output variables $V_{out}$, a
	$\GXW$ {\em formula} is an \s{LTL} formula of one of the forms (P1)-(P6) as specified in Table~\ref{table.specification.pattern}\@.
%
For example, \GXW formulas of the form (P2) stop locking $\varrho_{out}$ as soon as $(\varphi^{j}_{in} \vee\rho^0_{out})$ holds.
{\em \GXW specifications} 
are of the form
\begin{equation} \label{eq:synthesized.spec}
\varrho \rightarrow \bigwedge_{m=1\ldots k} \eta_m\mbox{~,}
\end{equation}
where $\varrho$ matches the \GXW pattern~(P6), and $\eta_m$ matches one 
of the patterns~(P1) through~(P5) in Table~\ref{table.specification.pattern}\@.
Furthermore, the notation ``.'' is used for projecting subformulas from $\eta_m$, 
when it satisfies a given type.  
For example, assuming that sub-specification $\eta_m$ is of pattern~P3, i.e., it matches  $\G (\phi^i_{in} \rightarrow \x^i \varrho_{out})$, $\eta_m.\varrho_{out}$ specifies the matching subformula for $\varrho_{out}$\@. 
Notice also that \GXW specifications, despite including the~\CU operator, have the {\em finite model property}, since the smallest number of unrolling steps for disproving the existence of an implementation is linear with respect to the structure of the given formula (cmp. Section~\ref{sub.sec.properties})\@. 


Instead  of directly synthesizing a Mealy machine as in standard LTL synthesis, we are considering here the generation of
{\em actor-based controllers} using the computational model of {\em synchronous dataflow} (\s{SDF}) without feedback loops.  
An {\em actor-based controller} is a tuple $\Sys= (\mathcal{V}_{in}, \mathcal{V}_{out}, Act, \tau)$, where
	$\mathcal{V}_{in}$ and $\mathcal{V}_{out}$ are disjoint sets of external input and output ports.
	Each port is a variable which may be assigned a Boolean value
	or \s{undefined} if no such value is available at the port\@.
	In addition, actors $\A \in Act$ may be associated with internal input ports $U_{in}$ and
	output ports $U_{out}$ (all named apart), which are also three-valued\@.
    The projection $\A.\s{u}$ denotes the port $\s{u}$ of $\A$\@.  
	An actor  $\A \in Act$ defines Mealy machine $\C$ whose input and output assignments are based on $\mathbf{2}^{U_{in}}$ and $\mathbf{2}^{U_{out}}$, i.e., the output update function of $\C$ sets each output port to \s{true} or \s{false},  when each input port has value in \{\s{true}, \s{false}\}\@. 
	 Lastly, $\A^{(i)}$ denotes a \emph{copy} of $\A$ which is indexed by~$i$\@. 
	 
	Let $Act.U_{in}$ and $Act.U_{out}$ be the set of all internal input and output ports for $Act$\@. The wiring $\tau \subseteq (\mathcal{V}_{in}\cup Act.U_{in}) \times (\mathcal{V}_{out}\cup Act.U_{out})$ connects one (external, internal) input port to one or more (external, internal) output ports. For convenience, denote the wiring from  port $\s{out}$ of $\A_1$ to port $\s{in}$ of $\A_2$ as ($\A_1.\s{out} \dashrightarrow  \A_2.\s{in}$).	
	All ports are supposed to be connected, and every internal input port and every external output port is only connected to one wire 
	(thus a port does not receive data from two different sources). 
	Also, we do not consider actor systems with feedback loops here (therefore no cycles such as the one in Figure~\ref{fig:Actor.Composition}(c)), since  systems without feedback loops can be statically scheduled~\cite{sdfscheduling}\@. 

Evaluation cycles are triggered externally under the semantics of synchronous dataflow.
In each such cycle, the data received at the external input ports is processed and corresponding values are transferred to external output ports. 
Notice also that the composition of actors under \s{SDF} acts cycle-wise 
as a Mealy machine~\cite{sdfcomposition}\@. 
We illustrate the {\em operational semantics} of actor-based systems under  \s{SDF}
by means of the example in Figure~\ref{fig:Actor.Composition}(a), with
input ports {\s{in1}}, {\s{in2}}, output port {\s{out}},
and actors $f_1$, $f_2$, $f_3$, $f_4$ (see also Figure~\ref{fig:Actor.Composition}(b))\footnote{The formal operational semantics, as it is standardized notation from \s{SDF}, is relegated to the appendix.}. 
Now, assume that in the first cycle, the input ports {\s{in1}} and {\s{in2}} receive the value (\s{false}, \s{true}) and in the second cycle the value (\s{false}, \s{true})\@. The \s{false} value in {\s{in1}} is copied to $f_1$.{\s{i}}. As $f_1$ is initially at state where $v=\s{false}$, it creates the output value \s{true} (places it to $f_1$.\s{o}) and changes its internal state to $v=\s{true}$. The value \s{true} from $f_1$.\s{o} is then transferred to $f_4.\s{i}_1$ and $f_2.\s{i}_1$. However, at this stage one cannot evaluate $f_2$ or $f_4$, as the $\s{i}_2$ port is not yet filled with a value. $f_3$ receives the value from ${\s{in2}}$ and produces $f_3.\s{o}$ to \s{false}. 
Continuing this process, at the end of first cycle ${\s{out}}$ is set to \s{true}, while in the second cycle,  ${\s{out}}$ is set to \s{false}. 

\begin{figure}[t]
	\centering
	\begin{minipage}{0.55\textwidth}
	\centering
	\includegraphics[width=\columnwidth]{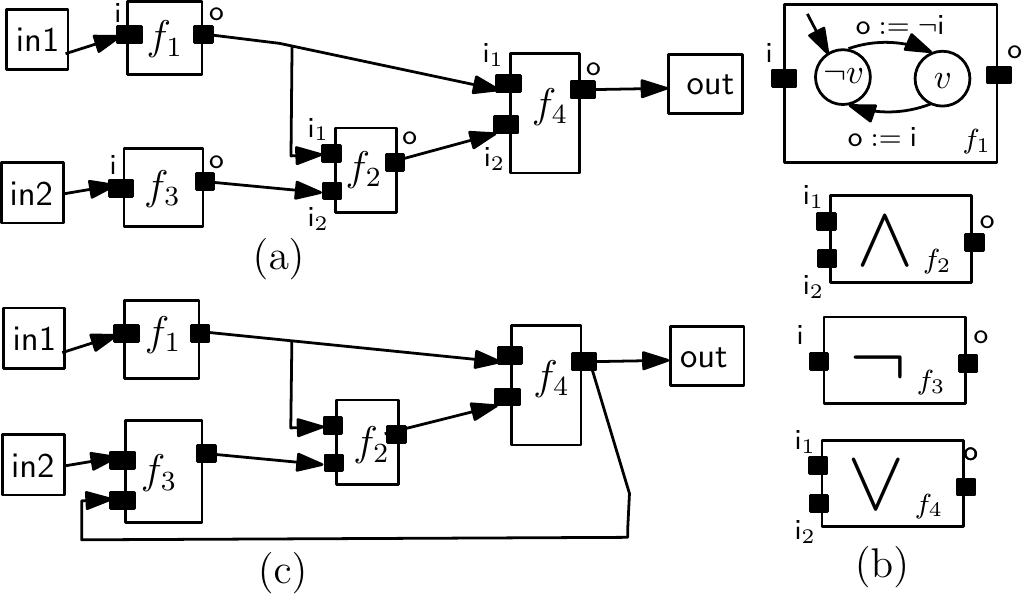}	
	\caption{An actor system allowing functional composition and corresponding actors $f_1$~(a)(b), feedback loops such as~(c) are not considered here.}	
	\label{fig:Actor.Composition}
	\end{minipage}\hfill
	\begin{minipage}{0.45\textwidth}
	\includegraphics[width=\columnwidth]{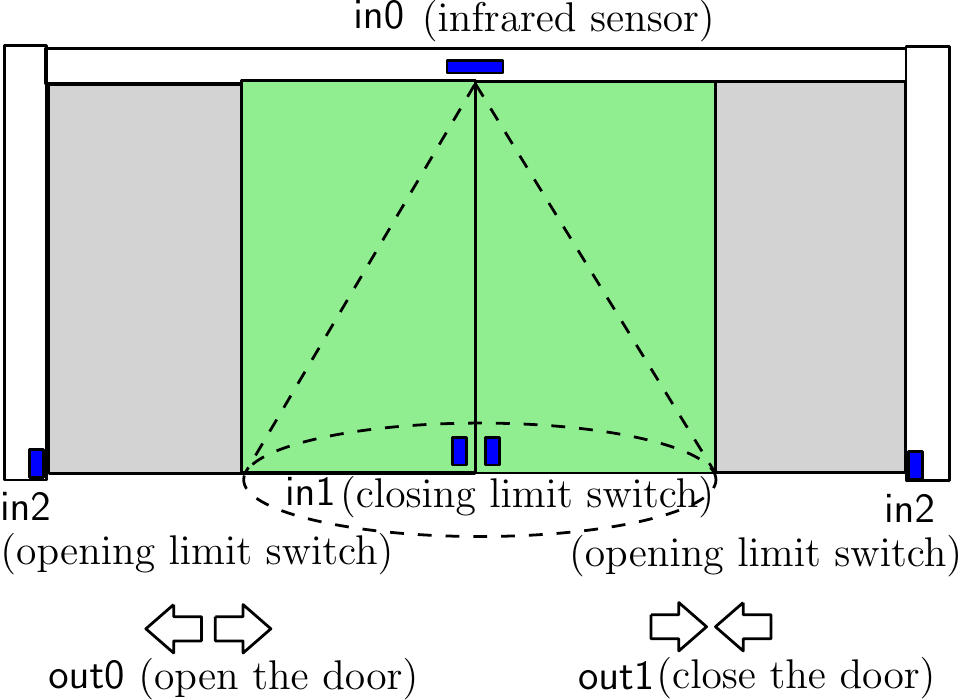}	
	\caption{Control of automatic door switch.}	
	\label{fig:Automatic.Door}
	\end{minipage}
\end{figure}

As we do not consider feedback loops between actors in $Act$, from input read to output write, one can, using the enumeration method as exemplified above, create a static linear list $\Xi$ of size $|Act|+|\tau|$, where each element $\xi_{ind} \in \Xi$ is either in $Act$ or in $\tau$, for specifying the linear order (from the partial order) how data is transferred between wires and actors. 
Such a total order $\Xi$ is also called an \emph{evaluation ordering} of the actor
system $\Sys$\@. 

%





One may wrap any Mealy machine $\C$ as an actor $\A(\C)$ by
simply 
creating corresponding ports in $\A(\C)$ and by setting
the underlying Mealy machine of $\A(\C)$ to $\C$\@. 
Therefore, actor-based controllers may be synthesized for a given \s{LTL} specification $\phi$ by first  synthesizing a Mealy machine~$\C$ realizing $\phi$, followed by the wrapping~$\C$ as~$\A(\C)$, creating external \s{I/O} ports, and connecting external \s{I/O} ports with $\A(\C)$\@. 

Given a \GXW specification $\phi$ over the input variables $V_{in}$ and output variables $V_{out}$, the problem of~\GXW~{\em synthesis} is to generate an 
actor-based SDF controller $\Sys$ realizing $\phi$.
As one can always synthesize a Mealy machine followed by wrapping it to an actor-based controller, \GXW~synthesis has the same complexity for Mealy machine and for actor-based controllers.



\vspace{-2mm}
\section{Example}\label{sec.example}
\vspace{-2mm}



We exemplify the use of $\GXW$ specifications and actor-based synthesis for these kinds of specification by means of an automatic sliding door~\cite{example}, 
which is visualized in Figure~\ref{fig:Automatic.Door}\@.
Inputs and outputs are as follows:
	 $\s{in0}$ is $\s{true}$ when someone enters the sensing field;
	 $\s{in1}$  denotes a closing limit switch - it is $\s{true}$ when two doors touch each other;
	 $\s{in2}$ denotes an opening limit switch - it is $\s{true}$ when the door reaches the end;
	 $\s{out0}$ denotes the opening motor - when it is set to $\s{true}$ the motor rotates clockwise, thereby triggering the door opening action; and	
	 $\s{out1}$ denotes closing motor - when it is set to $\s{true}$ the motor rotates counter-clockwise, thereby triggering the door closing action.
	 Finally, the triggering of a timer  $\s{t0}$ is modeled by means a (controllable) output variable $\s{t0start}$ and the expiration of a timer is modeled using an (uncontrollable) input variable $\s{t0expire}$\@. 

Before stating the formal \GXW specification for the example we introduce some mnemonics.
\comments{
\begin{eqnarray*}
	\s{entering}^1 &:=& \neg \s{in0} \wedge \x\,\s{in0}  \\
	\s{expired}^1 &:=& \neg \s{t0expire} \wedge (\x \,\s{t0expire}) \\
	\s{lim\_reached}^1 &:=& \neg \s{in2} \wedge \x \,\s{in2} \\
	\s{closing\_stopped} &:=& \s{in1} \vee \s{in0} \vee \s{out0}
\end{eqnarray*}
}

 \vspace{1mm}
 \begin{minipage}[t]{.45\textwidth}
	\raggedright
 \begin{itemize}
	\item $\s{entering}^1$ := $\neg \s{in0} \wedge \x\,\s{in0} $ 
	\item $\s{expired}^1$ := $\neg \s{t0expire} \wedge (\x \s{t0expire})$
\end{itemize}
	
 \end{minipage}
 \hfill
 \noindent
\begin{minipage}[t]{.45\textwidth}
	\raggedright	
\begin{itemize}
	\item $\s{lim\_reached}^1$ := $\neg \s{in2} \wedge \x \s{in2}$
	\item $\s{closing\_stopped}$ := $\s{in1} \vee \s{in0} \vee \s{out0}$ 
 \end{itemize}
 \end{minipage}
 \vspace{1mm}

The superscripts denote the maximum number of consecutive next-steps. Now the automatic sliding door controller is formalized in $\GXW$ as follows. 

\begin{figure}[!t]
	\centering
	\includegraphics[width=0.95\columnwidth]{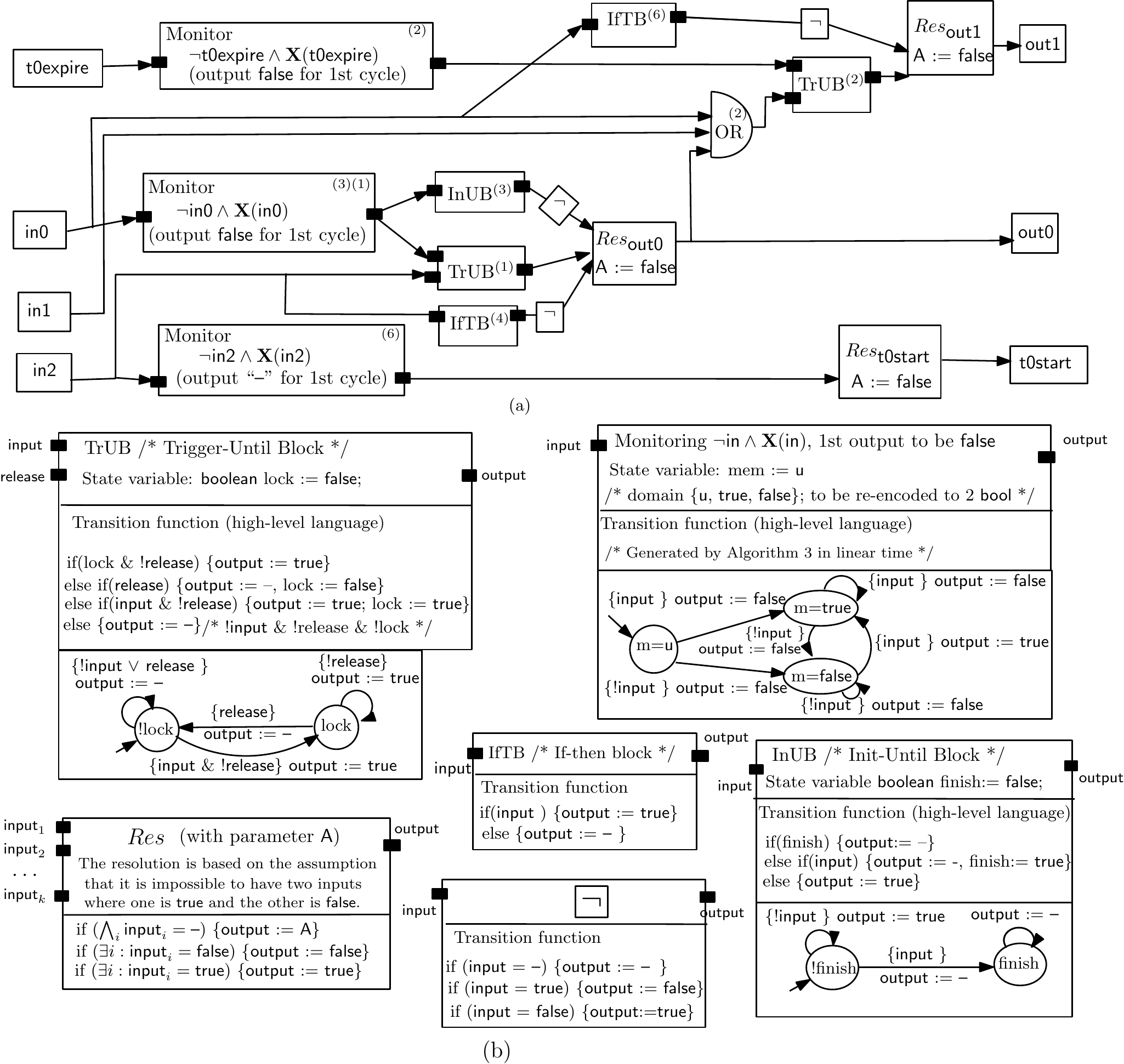}
			\caption{Actor-based controller realizing automatic sliding door.}	
			\label{fig:Resolution}	
	\vspace{1mm}
\end{figure}

\comments{
\begin{eqnarray*}
\mbox{(S1)} && \G (\s{entering}^1 \rightarrow \x (\s{out0}\,\CU\,\s{in2})) \\
\mbox{(S2)} && \G(\s{expired}^1 \rightarrow \x^1 (\s{out1}\,\CU\, \s{closing\_stopped})) \\
\mbox{(S3)} &&\neg\s{out0}\,\CU\,\s{entering}^1 \\
\mbox{(S4)} && \G (\s{in2} \rightarrow \neg \s{out0}) \\
\mbox{(S5)} && \G (\s{lim\_reached}^1 \leftrightarrow  \x (\s{t0start}) \\   \mbox{(S6)} && \G (\s{in0} \rightarrow \neg \s{out1}) \\
\mbox{(S7)} &&\G (\neg(\s{out0} \wedge \s{out1}))
\end{eqnarray*}
}
 \vspace{1mm}
 \begin{minipage}[t]{.5\textwidth}
 	\raggedright
 	
 	\begin{enumerate}[{S}1:]
 		\item 	$\G (\s{entering}^1 \rightarrow \x (\s{out0}\,\CU\,\s{in2}))$ 
 		\item 	$\G(\s{expired}^1 \rightarrow \x (\s{out1}\,\CU\, 
 		\s{closing\_stopped}))$
 		\item 	$\neg\s{out0}\,\CU\,\s{entering}^1$
 		\item 	$\G (\s{in2} \rightarrow \neg \s{out0})$
 	\end{enumerate}
 	
 \end{minipage}
 \hfill
 \noindent
 \begin{minipage}[t]{.35\textwidth}
 	\raggedright
 	
 	\begin{enumerate}[{S}1:]
 		\setcounter{enumi}{4}
 		\item 	$\G (\s{lim\_reached}^1 \leftrightarrow  \x (\s{t0start}))$
 		\item 	$\G (\s{in0} \rightarrow \neg \s{out1})$
 		\item 	$\G (\neg(\s{out0} \wedge \s{out1}))$	
 	\end{enumerate}
 	
 \end{minipage}
 \vspace{1mm}

In particular, formula (S1) expresses the requirement that the opening of the door should continue
($\s{out0}=\s{true}$) until the limit is reached ($\s{in2}$), and formulas (S3) and (S7) 
specify the expected initial behavior of the automatic sliding door. 
The \GXW specifications for the sliding door example are classified as follows: formulas (S1), (S2) are of type (P2), (S3) is of type (P1), (S4), (S6) is of type (P3), (S5) is of type (P4), and (S7) of type (P5) according to Table~\ref{table.specification.pattern}\@. 

Figure~\ref{fig:Resolution} visualizes an actor-based automatic sliding door controller which realizes the \GXW specification (S1)-(S7)\@.
It is constructed from a small number of building blocks, which are also described
in Figure~\ref{fig:Resolution}\@. Monitor actors, for example, are used for monitoring when the \s{entering}, \s{expired}, and \s{lim\_reached} constraints are fulfilled, the \s{OR} actor is introduced because of the \s{closing\_stopped} release condition in specification (S1), and the two copies of the {\em trigger-until} actors are introduced because of the (P2) shape of the specifications (S1) and (S2)\@.  The input and output ports of the 
{\em trigger-until} actor are in accordance with the namings for (P2) in Table~\ref{table.specification.skeleton}\@.  
{\em Resolution actors} are used for resolving potential conflicts between individual \GXW formulas in a specification.
These actors are parameterized with respect to a Boolean \s{A}, which is the output of the resolution actor in case all inputs of this actor may be in $\{\s{true},\s{false}\}$ (this set is denoted by the shorthand ``$\s{\textendash}$'' in Figure~\ref{fig:Resolution})\@.  
The presented algorithm sets up a \s{2QBF} problem for synthesizing possible values for these parameters.
Because of the constraint (S7) on possible outputs \s{out0} and \s{out1}, the parameter \s{A} for the resolution actor for output \s{out0}, for example, needs to be set to \s{A:=false}\@. 
Figure~\ref{fig:Resolution} also includes the operational behavior of selected actors in terms of high-level transitions and/or Mealy machines. The internal state and behavior for monitor actors, however, is synthesized, in linear time, from a given \GXW constraint on inputs (see Section \ref{sec.algorithms})\@.

Finally, the structural correspondence of the actor-based controller in Figure~\ref{fig:Resolution} with the given \GXW specification of the sliding door
example is being made explicit by superscripting actors with index $(i)$ whenever the actor has been introduced due to the $i$-th specification.

\vspace{-2mm}
\section{Structural Synthesis}\label{sec.algorithms}
\vspace{-2mm}


We now describe the algorithmic details for generating structured controllers from the \GXW 
specifications of the form $\varrho \rightarrow \bigwedge_{m=1\ldots k} \eta_m$\@.
The automated sliding door is used as running example for illustrating the result of each step.

First, our algorithm prepares I/O ports, iterates through every formula~$\eta_m$ for creating high-level controllers (Step~\ref{step.1}) based 
on the appropriate \GXW pattern. 
For specifications of types~P1 to~P3, Table~\ref{table.specification.skeleton} lists the corresponding LTL specification (as high-level control objective), where $\s{input}$ and $\s{release}$ are input Boolean variables, $\s{output}$ is an output Boolean variable.

Then, for each \GXW formula, the algorithm constructs actors and wirings for monitoring low-level events by mimicking the DNF formula structure (Steps~\ref{step.2.p123} and~\ref{step.2.p4})\@.
On the structural level of clause formulas in DNF, the algorithm 
constructs corresponding  controllers in linear time (Algorithm~\ref{algo.synthesis.monitor})\@.  
Finally, the algorithm applies 2QBF satisfiability checking (and synthesis of parameters for resolution actors) for guaranteeing nonexistence of 
potential conflicts between different formulas in the \GXW specifications (Step~\ref{step.3})\@.



\vspace{-2mm}

\subsection{High-level Control Specifications and Resolution Actors} 
\vspace{-2mm}

\begin{ProcessStep}[t]
	\begin{footnotesize}
		\SetKwInOut{Input}{Input}
		\SetKwInOut{Output}{Output}
		\Input{LTL specification $\phi = \varrho \rightarrow \bigwedge_{m=1\ldots k} \eta_m$, input variables $V_{in}$, output variables $V_{out}$}
		\Output{Actor-based (partial) controller implementation $\Sys= (\mathcal{V}_{in}, \mathcal{V}_{out}, Act, \tau$), $\Map_{out}$}
		
		\textbf{let} $\Map_{pattern}$ := \{P1 $\mapsto$ InUB, P2 $\mapsto$ TrUB, P3 $\mapsto$ IfTB\}
		
			$\mathcal{V}_{in} := \{\;\boxed{v_{in}}\;|\;v_{in} \in V_{in}\}$	
		\\
			$\mathcal{V}_{out} := \{\;\boxed{v_{out}}\;|\;v_{out} \in V_{out}\}$
		
		
		\ForEach{$\eta_m$, $m=1\ldots k$}{			
			\If{ $(p := \emph{\textsf{DetectPattern}}(\eta_m)) \in \{P1, P2, P3\}$}{				
				Create actor $\A^{(m)}$ from  $\A:=\Map_{pattern}$.get($p$), and add  to $\Sys$\;				
			} \lElseIf{$(p := \emph{\textsf{DetectPattern}}(\eta_m)) \not\in \{P4, P5, P6\}$}{
			\Return \textbf{error}
		}						
	}	
	
	\textbf{let} $\Map_{out}$ := \textsf{NewEmptyMap}()\;
	
	\lForEach{$v_{out} \in V_{out}$}{
		$\Map_{out}$.put($v_{out}$, \textsf{NewEmptyList}());
	}
	
	\ForEach{$\eta_m$, $m=1\ldots k$}{		
		$\Map_{out}$.get($v_{out}$).add($m$), where 		$v_{out}$ is the output variable used in $\tau_{m}.\varrho_{out}$;

	} 
	
	\lForEach{$v_{out} \in V_{out}$}{
		Add	actor $Res_{v_{out}}$ := \textsf{CreateResActor}($\Map_{out}$.get($v_{out}$).size())  to $Act$			
	}
	
	\ForEach{$\eta_m$, $m=1\ldots k$}{		
		\textbf{let}	$v_{out}$ be the variable used in $\tau_{m}.\varrho_{out}$,   $\s{ind}$ := 	$\Map_{out}$.get($v_{out}$).indexOf($m$)\;	
		
		\If	(\tcp*[h]{negation is used in literal}){$\neg v_{out}$ equals $\tau_{m}.\varrho_{out}$}{
			Create a negation actor $\boxed{\neg}^{(m)}$ and add it to $Act$\;	
			$\tau := \tau \cup \{(\A^{(m)}.\s{output} \dashrightarrow  \boxed{\neg}^{(m)}.\s{input}), (\boxed{\neg}^{(m)}.\s{output} \dashrightarrow Res_{v_{out}}.{\s{input}_{\s{ind}}})\}$\;
		}\lElse{
		$\tau := \tau \cup \{(\A^{(m)}.\s{output} \dashrightarrow  Res_{v_{out}}.{\s{input}_{\s{ind}}})\}$
	}

} 	

\lForEach{$v_{out} \in V_{out}$}{		
	$\tau := \tau \cup \{(Res_{v_{out}}.\s{output} \dashrightarrow \boxed{v_{out}})\}$
}

\caption{Prepare external I/O ports, initiate high-level controller and resolution controllers}
\label{step.1}
\end{footnotesize}
\end{ProcessStep}

The initial structural recursion over \GXW formulas is 
described in Step~\ref{step.1}\@.

\vspace{-2mm}
\paragraph{Step 1.1 - Controller for high-level control objectives.} 

Line~1 associates the three high-level controller actors InUB, TrUB, IfTB
with their corresponding pattern identifier.
Implementations for the actors InUB, TrUB, IfTB are listed in  Figure~\ref{fig:Resolution}(b)\@. 
For example, the actor IfTB is used for realizing $\G (\s{input} \rightarrow \s{output})$ in Table~\ref{table.specification.skeleton}\@.  
When \s{input} equals $\s{false}$, the output produced by this actor equals ``$\s{\textendash}$''. This symbol is used as syntactic sugar for the
set $\{\s{true}, \s{false}\}$\@. 
Therefore the output is unconstrained, that is, it is feasible for \s{output} to 
be either \s{true} or \s{false}. The value ``$\s{\textendash}$'' is 
transferred in the dataflow, thereby allowing the delay of decisions when considering multiple specifications influencing the same output variable.

\vspace{-2mm}
\paragraph{Step 1.2 - External I/O ports.} Line~2 and~3 are producing external input and output port for each input variable $v_{in}\in V_{in}$ and output variable $v_{out} \in V_{out}$. 

\vspace{-2mm}
\paragraph{Step 1.3 - High-level control controller instantiation.}
 Lines~4 and~5 iterate through each specification $\eta_m$ for finding the corresponding pattern (using \textsf{DetectPattern}).
 Based on the corresponding type, line~6 creates a high-level controller by copying the content stored in the map. If there exists a specification which does not match one of the patterns, immediately reject (line~7). 
 Notice that pattern~P4 is handled separately in Step~\ref{step.2.p4}\@. 
For the door example, the controller in  Figure~\ref{fig:Resolution}(a) contains the two copies $TrUB^{(1)}$ and $TrUB^{(2)}$ of the trigger-until actor $TrUB$\@;
the subscripts of these copies are tracing the indices of the originating formulas (S1) and (S2)\@.

\vspace{-2mm}
\paragraph{Step 1.4 - Resolution Actors.} This step is to consider all sub-specifications that influence the same output variable $v_{out}$. Line~9 to 11 adds, for each specification $\eta_m$ using $v_{out}$, its index $m$ maintained by $\Map_{out}$.get($v_{out}$).  E.g., for the door example,  specifications S1, S3 and S4 all output $\s{out0}$. Therefore after executing line 10 and~11, we have $\Map_{out}$.get($\s{out0}$)=$\{1, 3, 4\}$, meaning that for variable $\s{out0}$, the value is influenced by S1, S3 and S4. 

For each output variable $v_{out}$, line~12 creates one \emph{Resolution Actor} $Res_{v_{out}}$ which contains one parameter equaling the number of specifications using $v_{out}$ in $\varrho_{out}$. 
Here we make it a simple memoryless controller as shown in Figure~\ref{fig:Resolution}(b) - $Res_{v_{out}}$ outputs $\s{true}$ when one of its inputs is \s{true}, outputs $\s{false}$ when one of its inputs is \s{false}, and outputs $\s{A}$ (which is currently an unknown value to be synthesized later) when all inputs are ``$\s{\textendash}$''.
The number of input pins is decided by calling the map.  
E.g., for $Res_{\s{out0}}$ in Figure~\ref{fig:Resolution}(a), three inputs are needed because  $\Map_{out}$.get($\s{out0}$).size()=3\@.
The output of the high-level controller $A^{(m)}$ is connected to the input of $Res_{v_{out}}$. When negation is needed due to the negation symbol in  $\varrho_{out}$ (line 15), one introduces a negation actor~$\boxed{\neg}$ which negates $A^{(m)}.\s{output}$ when $A^{(m)}.\s{input}$ is \s{true} or \s{false} (line 16,17). 
To ensure that connections are wired appropriately, $\Map_{out}$ is used such that the number ``$\s{ind}$''  records the precise input port of the $Res_{out}$ (line 14).  
Consider again the door example. Due to the maintained list $\{1, 3, 4\}$, $TrUB^{(1)}.\s{output}$ is connected to  $Res_{v_{out}}.\s{input}_1$, i.e., the first input pin of $Res_{v_{out}}$. 
Also, as $\neg \s{out0}$ is used in S3 and S4, the wiring from $\emph{InUB}^{(3)}$ and $\emph{IfTB}^{(4)}$ to $Res_{\s{out0}}$ in Figure~\ref{fig:Resolution}(a) has a negation actor in between.

Lastly, line 19 connects the output port of a resolution actor to the corresponding external output port. If $Res_{v_{out}}$ receives simultaneously $\s{true}$ and $\s{false}$  from two of its input ports, then $Res_{v_{out}}.\s{output}$ needs to be simultaneously $\s{true}$ and $\s{false}$. These kinds of situations are causing unrealizability of \GXW specification, and Step~\ref{step.3} is used for detecting these kinds of inconsistencies\@.  



\begin{ProcessStep}[!t]

\begin{footnotesize}

\SetKwInOut{Input}{Input}
\SetKwInOut{Output}{Output}
\Input{$\phi = \varrho \rightarrow \bigwedge_{m=1\ldots k} \eta_m$, $V_{in}$, $V_{out}$,
	$\Sys= (\mathcal{V}_{in}, \mathcal{V}_{out}, Act, \tau$) from Step~\ref{step.1}.}
\Output{Actor-based (partial) controller $\Sys= (V_{in}, V_{out}, Act, \tau$) by adding more elements}

\ForEach{$\eta_m$, $m=1\ldots k$}{		
	$p$ := \textsf{DetectPattern}($\eta_m$)\;
	
	\If{$p \in \{P1, P2, P3\}$ }{
		Add an OR-gate actor $\s{OR}_{\phi^{i}_{in}}$ with $\s{size}(\eta_m.\phi^{i}_{in})$ inputs to $Act$\;
		
		\ForEach{clause formula $\chi^i_{in}$ from DNF of $\eta_m.\phi^{i}_{in}$}{
			Add $\A(\C)$ to $Act$, where	$\C$ := \textsf{Syn}($\G (\chi^i_{in} \leftrightarrow \x^{i} \sig{out}) \wedge \bigwedge^{i-1}_{z=0}\x^z \neg \sig{out}$, $\s{In}(\chi^i_{in})$, \{\s{out}\})\;

			\lForEach{$v_{in} \in \s{In}(\chi^i_{in})$}{
				$\tau := \tau \cup \{(\boxed{v_{in}} \dashrightarrow  \A(\C).v_{in})\}$
			}

			$\tau := \tau \cup \{(\A(\C).\s{out} \dashrightarrow  \s{OR}_{\phi^{i}_{in}}.\s{in}_{\s{Index}(\chi^{i}_{in}, \phi^{i}_{in})})\}$\;
			
		}
		$\tau := \tau \cup \{(\s{OR}_{\phi^{i}_{in}}.\s{out} \dashrightarrow  \A^{(m)}.\s{input})\}$\;		
		
	}

		\If{$p \in \{P2\}$ 	}{
			Add an OR-gate actor $\s{OR}_{\varphi^{j}_{in} \vee \rho^0_{out}}$ with $\s{size}(\eta_m.(\varphi^{j}_{in} \vee \rho^0_{out}))$ inputs  to $Act$\;
			\ForEach{clause formula $\chi^h_{in}$ from DNF  $\eta_m.\varphi^{j}_{in}$}{
				Add $\A(\C)$ to $Act$, where $\C$ := \textsf{Syn}($\G (\chi^h_{in} \leftrightarrow \x^{h} \sig{out}) \wedge \bigwedge^{h-1}_{z=0} \x^z \neg \sig{out}$, $\s{In}(\chi^h_{in})$, \{\s{out}\})\;
				\lForEach{$v_{in} \in \s{In}(\chi^i_{in})$}{
					$\tau := \tau \cup \{(\boxed{v_{in}} \dashrightarrow  \A(\C).v_{in})\}$
				}
				\lIf{$h = 0$}{
					Add ($\A(\C).\s{out} \dashrightarrow  \s{OR}_{\varphi^{j}_{in}\vee \rho^0_{out}}.\s{in}_{\s{Index}(\chi^{z}_{in}, \varphi^{j}_{in}\vee \rho^0_{out})}$) to $\tau$
				} \uElse {
				Add $\A(\C_{\Theta_h})$ to $Act$, where $\C_{\Theta_h}$ := \textsf{CreateThetaCtrl}(h)\;

				$\tau := \tau \cup \{(\s{OR}_{\phi^{i}_{in}}.\s{out} \dashrightarrow  \A(\C_{\Theta_h}).\s{set}), 
				(\A(\C).\s{out} \dashrightarrow \A(\C_{ \Theta_h}).\s{in}),  (\A(\C_{\Theta_h}).\s{out} \dashrightarrow  \s{OR}_{\varphi^{j}_{in}\vee \rho^0_{out}}.\s{in}_{\s{Index}(\chi^{z}_{in}, \varphi^{j}_{in}\vee \rho^0_{out})})\}$\; 

			}

		}
		
		\comments{

		}
		\ForEach{clause formula $\chi^0_{out}$ from DNF of $\eta_m.\rho^0_{out}$}{	
			Add an AND-gate actor $\s{AND}_{\eta_m.\chi^0_{out}}$ with $\s{size}(\chi^0_{out})$ inputs to $Act$\;
			
			\ForEach{literal $\omega_{out}$ of $\chi^0_{out}$}{	
				\textbf{let} $v_{out}$ be the variable used in $\omega_{out}$\;
				
				\If{$\omega_{out}$ equals $\neg v_{out}$ (i.e., negation is used in literal)}{

					Create $\boxed{\neg}_{Res_{v_{out}}}$  and add it to $Act$ (if not exists in $Act$)\;	
					Add ($Res_{v_{out}}.\s{output} \dashrightarrow  \boxed{\neg}_{Res_{v_{out}}}.\s{input}$) to $\tau$ (if not exists in $\tau$)\;	 
					Add ($\boxed{\neg}_{Res_{v_{out}}}.\s{output} \dashrightarrow \s{AND}_{\chi^0_{out}}.\s{in}_{\s{Index}(\omega_{out}, \chi^0_{out})}$) to $\tau$\;
				}
				\lElse
				{
					$\tau := \tau \cup  \{Res_{v_{out}}.\s{output} \dashrightarrow \s{AND}_{\chi^0_{out}}.\s{in}_{\s{Index}(\omega_{out}, \chi^0_{out})}\}$	
				}	
			}	
			$\tau := \tau \cup \{\s{AND}_{\chi^0_{out}}.\s{out} \dashrightarrow  \s{OR}_{\varphi^{j}_{in}\vee \rho^0_{out}}.\s{in}_{\s{Index}(\chi^0_{out}, \varphi^{j}_{in}\vee \rho^0_{out})}\}$
			
		}
		$\tau := \tau \cup \{\s{OR}_{\phi^{j}_{in}\vee \rho^0_{out}}.\s{out} \dashrightarrow  \A^{(m)}.\s{release}\}$\;
		
	}

}

\caption{Synthesize monitoring controllers (for pattern P1, P2, P3)}
\label{step.2.p123}

\end{footnotesize}
\end{ProcessStep}

\vspace{-2mm}
\subsection{Monitors and and Phase Adjustment Actors} 
\vspace{-1mm}

The second step of the algorithm synthesizes controllers for monitoring the appearance of an event matching the subformula, and connects these controllers to previously created actors for realizing high-level control objectives. 
For a formula $\phi$ in DNF form, let $\s{size}(\phi)$ return the number of clauses in $\phi$.  For clause formula $\chi^i_{in}$ in $\phi$, let $\s{In}(\chi^i_{in})$ return the set of all input variables and $\alpha=\s{Index}(\chi^{i}_{in}, \phi^{i}_{in})$ specify that $\chi^{i}_{in}$ is the $\alpha$-th clause in $\phi^{i}_{in}$. 



\paragraph{Step 2.1 - Realizing ``\s{input}'' part for pattern P1, P2, P3.}

In Step~\ref{step.2.p123}, from line 3 to~9, the algorithm synthesizes controller realizing the portion \s{input} listed in Table~\ref{table.specification.skeleton}, or equivalently, the $\phi^{i}_{in}$ part listed in Table~\ref{table.specification.pattern}. Line~4 first creates an OR gate, as the formula is represented in DNF. Then synthesize a controller for monitoring each clause formula (line 5, 6) using function \textsf{Syn}, with input variables defined in $\s{In}(\chi^i_{in})$ and a newly introduced output variable \{\s{out}\}\footnote{For pattern type P2 or P3, one needs to have each clause formula of $\phi^{i}_{in}$ be of form $\chi^i_{in}$, i.e., the highest number of consecutive~\x should equal~$i$. The purpose is to align $\chi^i_{in}$ with the preceding $\x^{i}$ in $\G (\phi^i_{in} \rightarrow \x^i  (\varrho_{out}\,\CU\, (\varphi^{j}_{in} \vee \rho^0_{out})) )$ or $\G (\phi^i_{in} \rightarrow \x^i  \varrho_{out})$. If a clause formula in DNF contains no literal starting with $\x^i$, one can always pad a conjunction $\x^i\,\s{true}$ to the clause formula. The padding is not needed for~P1.}. The first attempt is to synthesize $\G (\chi^i_{in} \leftrightarrow \x^{i} \sig{out})$. By doing so, the value of $\chi^i_{in}$ is reflected in \sig{out}. However, as the output of the synthesized controller is connected to the input of an OR-gate (line~8) and subsequently, passed through the port ``\s{input}'' of the high-level controller (line~9),  one needs to also ensure that from time 0 to $i-1$, $\s{out}$ remains~\s{false}, such that the high-level controller $\A_{m}$ for specification $\eta_m$ will not be ``unintentionally'' triggered and subsequently restrict the output. To this end, the specification to be synthesized is $\G (\chi^i_{in} \leftrightarrow \x^{i} \sig{out}) \wedge \bigwedge_{z=0 \ldots i-1} \x^z \neg \sig{out}$, being stated in line~6. 

For above mentioned property that needs to be synthesized in line~6, one does not need to use full LTL synthesis algorithms. Instead, we present a simpler algorithm (Algorithm~\ref{algo.synthesis.monitor}) which creates a controller in time linear to the number of variables times the maximum number of~\x operators in the formula. Here again for simplicity, each state variable is three-valued ($\s{true}$, $\s{false}$,~$\s{u}$); in implementation every 3-valued state variable is translated into~2 Boolean variables. 
In the algorithm, state variable $v_{in}[i]$ is used to store the i-step history of for $v_{in}$, and $v_{in}[i]=\s{u}$ means that the history is not yet recorded. Therefore, for the initial state, all variables are set to $\s{u}$ (line~4). The update of state variable $v_{in}[i+1]$ is based on the current state of $v_{in}[i]$, but for  state variable $v_{in}[1]$, it is updated based on current input $v_{in}$ (line 17). With state variable recording previously seen values, monitoring the event is possible, where the value of $\s{out}$ is based on the condition stated from line~6 to~16.

Consider a controller realizing  $\chi^i_{in} :=  \neg \s{in1} \wedge \x \s{in1} \wedge \x \s{in2} \wedge \x\x \neg \s{in2}$, being executed under a run prefix (\s{false, false})(\s{true, true})(\s{true, false}). As shown in Figure~\ref{fig:Monitoring},  the update of state variables is demonstrated  by a left shift. The first and the second output are \s{false}.
After receiving the third input, the controller is able to detect a rising edge of \s{in1} (via \sig{in1[2]}=\s{false} and \sig{in1}[1]=\s{true}) is immediately followed by a falling edge of \s{in2} (via \sig{in2[1]}=\s{true} and \sig{in2}=\s{false}). 



\setcounter{algocf}{0}
\begin{algorithm}[t]
	\begin{footnotesize}

\SetKwInOut{Input}{Input}
\SetKwInOut{Output}{Output}
\Input{LTL specification $\G (\chi^i_{in} \leftrightarrow \x^{i} \sig{out}) \wedge \bigwedge^{i-1}_{z=0}\x^z \neg \sig{out}$), input variables $\s{In}(\chi^i_{in})$, output variables \{\s{out}\} } 
\Output{Mealy machine $\C = (Q, q_0,  \mathbf{2}^{V_{in}}, \mathbf{2}^{V_{out}}, \Delta)$ for  realizing the specification}

$V_{out} := \{\s{out}\}$, $V_{in} := \s{In}(\chi^i_{in})$\;
\ForEach(\tcp*[h]{Create all state variables in the Mealy machine}){Variable $v_{in} \in \s{In}(\chi^i_{in})$ }
{
	\lFor{$j = 1 \ldots i$}
	{
		$Q := Q \cup \{v_{in}[j]\}$, where $v_{in}[j]$ is three-valued (\s{true}, \s{false}, \s{$\s{u}$}) 
	}
}
$q_0 := \bigwedge_{v_{in} \in \s{In}(\chi^i_{in}), j \in \{1, \ldots i\}} v_{in}[j] := \s{$\s{u}$}$  \,\, \,\, \,\, \tcp*[h]{Initial state}\;


\textbf{let} \s{Cond} := \s{true}\;
\ForEach{literal $\x^k v_{in}$ in $\chi^i_{in}$}{
	\eIf{$k=i$}
	{	\s{Cond} := $\s{Cond} \wedge (v_{in} = \s{true}) $
	}{
	\s{Cond} := $\s{Cond} \wedge (v_{in}[i-k] = \s{true}) $
}
}
\ForEach{literal $\x^k \neg v_{in}$ in $\chi^i_{in}$}{
	\eIf{$k=i$}
	{	\s{Cond} := $\s{Cond} \wedge (v_{in} = \s{false}) $
	}{
	\s{Cond} := $\s{Cond} \wedge (v_{in}[i-k] = \s{false}) $
}	
}
$\delta_{out} := (\s{out} := \s{Cond})$ \,\, \,\, \,\, \tcp*[h]{Output assignment should follow the value of \s{Cond}}\;

$\delta_{s} := (\bigwedge_{v_{in} \in \s{In}(\chi^i_{in}),  j = 1 \ldots i-1} v_{in}[j+1] := v_{in}[j]) \wedge  (\bigwedge_{v_{in} \in \s{In}(\chi^i_{in})} v_{in}[1] := v_{in})$\;
\caption{Realizing \textsf{Syn} without full LTL synthesis}
\label{algo.synthesis.monitor}

\end{footnotesize}
\end{algorithm}

\vspace{-2mm}
\paragraph{Step 2.2 - Realizing ``\s{release}'' part for pattern P2.}

Back to Step~\ref{step.2.p123}, the algorithm from line~10 to~29 synthesizes a controller realizing the portion \s{release} listed in Table~\ref{table.specification.skeleton}, or equivalently, the $\varphi^{j}_{in} \vee \rho^0_{out}$ part listed in Table~\ref{table.specification.pattern}. The  DNF structure is represented as an OR-actor (line~11), taking input from $\varphi^{j}_{in}$ (line 12-18) and  $\rho^0_{out}$ (line~19-28).

For $\rho^0_{out}$ (line~19-28), first create an AND-gate for each clause in DNF. Whenever output variable $v_{out}$ is used, the wiring is established by a connection to the \s{output} port of $Res_{v_{out}}$ (line~27). Negation in the literal is done by adding a wire to connect $Res_{v_{out}}$ to a dedicated negation actor $\boxed{\neg}_{Res_{v_{out}}}$ to negate the output (line~23 to~26). 
Consider, for example, specification $S2$ of the automatic door running example, where the ``\s{release}'' part $(\s{in1}\,\vee\,\s{in0}\,\vee\,\s{out0})$ is a disjunction of literals using output variable $\s{out0}$\@. As a consequence,
one creates an AND-gate (line~20) which takes one input $Res_{\s{out0}}.\s{output}$ (line~27), and connects this AND-gate to the OR-gate (line~28)\@. 
Figure~\ref{fig:Resolution}(a) displays an optimized version of this construction, since the single-input AND-gate may be removed and $Res_{\s{out0}}.\s{output}$ is directly wired with the OR-gate. 

For $\varphi^{j}_{in}$ (line~12 to~23), similar to Step~2.1, one needs to synthesize a controller which tracks the appearance of~$\chi^h_{in}$ (line~13). However, the start of tracking is triggered by $\phi^{i}_{in}$ (the \sig{input} subformula). That is, whenever $\phi^{i}_{in}$ is $\s{true}$, start monitoring if $\varphi^{j}_{in}$ has appeared \s{true}. This is problematic when $\chi^h_{in}$ contains $\x$ operators (i.e., $h >0$).
To realize this mechanism, at line~19, the function \textsf{CreateThetaCtrl} additionally initiates a controller which guarantees the following: Whenever input variable $\s{set}$ turns \s{true}, the following~$h$ output value of $\s{out}$ is set to \s{false}. After that, the value of output variable \s{out} is the same as the input variable \s{in}. This property can be formulated as $\Theta_h$ (to trigger consecutive $h$ $\s{false}$ value over $\s{out}$ after seeing $\s{set}=\s{true}$) listed in Equation~\ref{eq:theta}, with implementation shown in Figure~\ref{fig:Theta}. By observing the Mealy machine and the high-level transition function, one infers that the time for constructing such a controller in symbolic form is again linear to $h$.



\vspace{-2mm}
\begin{equation}\label{eq:theta}
\Theta_h := (\neg \s{out}\,\CU\,\s{set})  \wedge \G (\s{set} \rightarrow (\bigwedge^{h-1}_{z=0} \neg\x^z \s{out} \wedge \x^h ((\s{in} \leftrightarrow \s{out})\,\CU \,\s{set})))
\end{equation}
%
\noindent
The overall construction in Step~\ref{step.2.p123} is illustrated using the example in Figure~\ref{fig:Time.Shifting}, which realizes the formula
\begin{equation}\label{eq:simple}
\G ((\neg \s{in1} \wedge \x \s{in1}) \rightarrow \x (\s{out1}\; \CU (\neg \s{in2} \wedge \x \s{in2})   ))
\end{equation}
with $V_{in} = \{\s{in1}, \s{in2}\}$ and  $V_{out} = \{\s{out1}\}$\@.
This specification requires to set output \s{out1} to $\s{true}$ when a rising edge of \s{in1} appears, and after that, \s{out0} should remain \s{true} until detecting a raising edge of $\s{in2}$. Using the algorithm listed in Step~\ref{step.2.p123}, line~6 synthesizes the monitor for the \s{input} part (i.e., detecting rising edge of $\s{in1}$), line~13 synthesizes the monitor for the \s{release} part (i.e., detecting rising edge of $\s{in2}$), line~14 creates the  wiring from input port to the monitor. As $h=1$ (line~16), line~17 creates $\A(\C_{\Theta_1})$, and line~18 establishes the wiring to and from $\A(\C_{\Theta_1})$. 

The reader may notice that it is incorrect to simply 
connect the monitor controller for $\neg\s{in2} \wedge \x \s{in2}$ directly to TrUB.$\s{release}$, as, 
when both 
$\neg\s{in1} \wedge \x \s{in1}$ and $\neg\s{in2} \wedge \x \s{in2}$ are \s{true} at the same time, TrUB.\sig{output} is unconstrained. 
 On the contrary, in Figure~\ref{fig:Time.Shifting}, when $\neg\s{in1} \wedge \x \s{in1}$ is~\s{true} and the value is passed through TrUB.\s{input}, $\A(\C_{\Theta_1})$ enforces to invalidate the incoming value of TrUB.\s{release} for~1 cycle by setting it to \s{false}\@.


\begin{figure}[t]
	\centering
	\begin{minipage}{0.54\textwidth}
		\includegraphics[width=\columnwidth]{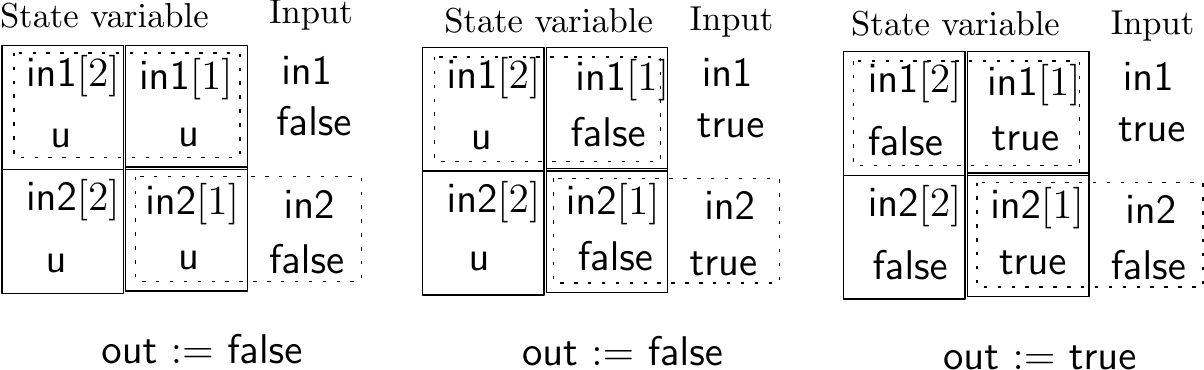}	
		\caption{Executing monitor with $\chi^i_{in} :=  \neg \s{in1} \wedge \x \s{in1} \wedge \x \s{in2} \wedge \x\x \neg \s{in2}$, by taking first three inputs (\s{false, false})(\s{true, true})(\s{true, false}).} 	
		\label{fig:Monitoring}
		\includegraphics[width=\columnwidth]{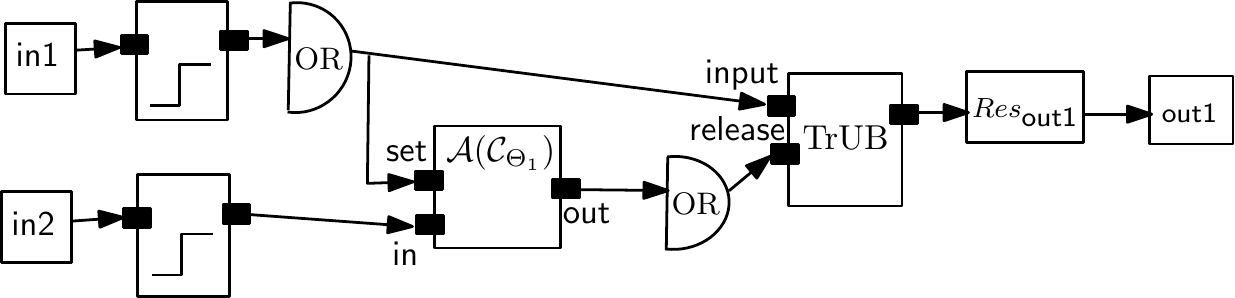}	
		\caption{Correct controller construction for specification satisfying pattern P2.}	
		\label{fig:Time.Shifting}
	\end{minipage}\hfill
	\begin{minipage}{0.45\textwidth}
		\includegraphics[width=\columnwidth]{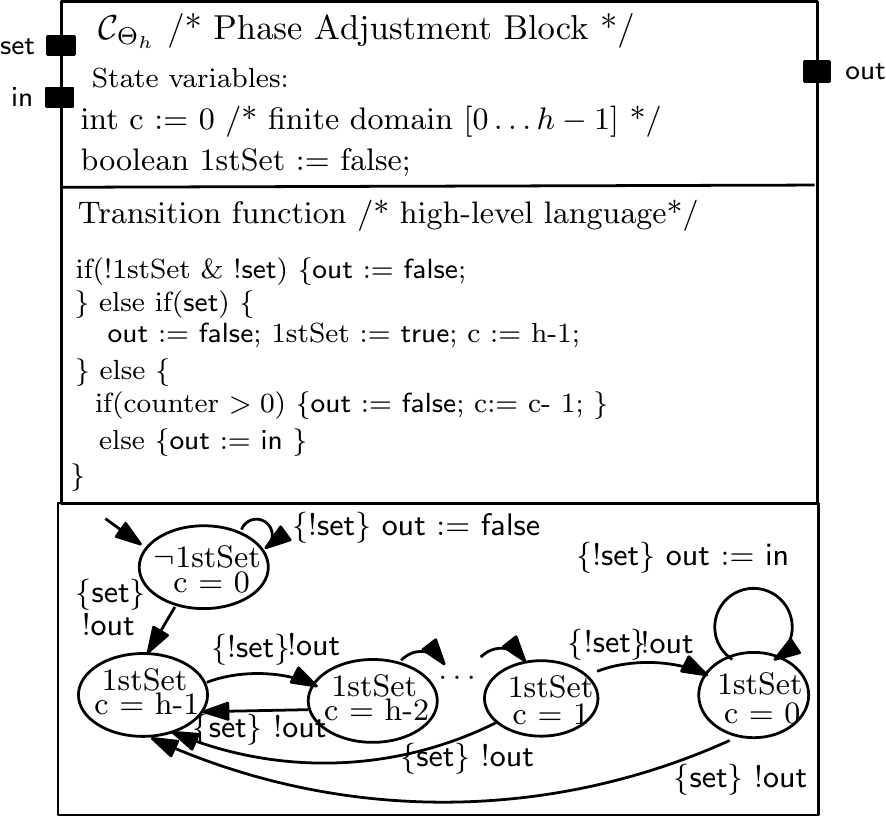}	
		\caption{Implementing $\Theta_h$ (state variables not mentioned in update remain the same value).}	
		\label{fig:Theta}
	\end{minipage}
\end{figure}



\vspace{-2mm}
\paragraph{Step 3 - Realizing "\s{input}" for pattern P4.}
For pattern P4, in contrast to pattern P1, P2, and P3, the synthesized controller (following Step~\ref{step.2.p4}) is directly connected to a Resolution Actor. To maintain maximum freedom over output variable, one synthesizes the event monitor from the specification allowing the first $i$ output to be $\s{\textendash}$, via $\bigwedge^{i-1}_{z=0}\x^{z} \s{dc} \wedge \x^{i}\G \neg \s{dc}$. The construction is analogous to Algorithm~\ref{algo.synthesis.monitor}\@.

\vspace{-2mm}
\paragraph{Optimizations.} 
Runtimes for Steps~\ref{step.2.p123} and~\ref{step.2.p4} may be optimized 
by using simple pattern matching and hashing of previously synthesized controllers. 
%
%
	We are listing three different opportunities for optimized 
	{\em generation of monitors}\@.
    First, the controller in Figure~\ref{fig:Resolution}(a) for monitoring $\neg \s{in0} \wedge \x\,\s{in0}$ is connected to two high-level controllers.
    The second case can be observed in Figure~\ref{fig:Time.Shifting}, where by rewriting $\s{in1}$ and $\s{in2}$ to $\s{in}$, the controller being synthesized is actually the same. Therefore, one can also record the pattern for individual monitor and perform synthesis once per pattern. 
	A third opportunity for optimization occurs when Algorithm~\ref{algo.synthesis.monitor} takes $i=0$ (i.e., no \x operator is used). In these case there is no need to create a controller at all and one may proceed by directly building a combinatorial circuit, similar to the constructions of line~19 to~28 in Step~\ref{step.2.p123}. For example, for specification S2 of the automatic door, the \s{release} part is $\s{in1} \vee \s{in0} \vee \s{out0}$; since no $\x$ operator occurs, a combinatorial circuit is created by wiring directly $\boxed{\s{in1}}$ and $\boxed{\s{in0}}$ to the OR-gate.

	
	
	


\vspace{-2mm}
\subsection{Parameter Synthesis for 2QBF without Unroll}\label{sub.sec.algorithm.parameter.syn.no.unroll}

Step~\ref{step.2.p4} as described above constructs actors as building blocks
and wires the actors according to the structure of the given \GXW specification.
The resulting (partial) controller, however, does not yet realize this specification as it may still contain unknowns in the resolution actors. 
Further checks are necessary, and a controller is rejected if one of the following conditions holds.
\begin{description}
	\item[(Condition 1)] The wiring forms a directed loop in the constructed actor-based controller. 
	\item[(Condition 2)] It is possible for a resolution actor $Res_{v_{out}}$ to receive \s{true} and \s{false} simultaneously. 
	\item[(Condition 3)] Outputs violate invariance conditions of pattern~P5.
\end{description}
Condition~1 is checked by means of a simple graph analysis: 
(1) let all ports be nodes and wirings be edges; (2) 
for each actor, create directed edges from each of its input port to each of its output port; (3) check if there exists a strongly connected component in the resulting graph using, for example, Tarjan's algorithm~\cite{tarjan}\@. 
%
%
 
 Conditions~2 and~3 are checked by means of creating corresponding \s{2QBF} satisfiability problems.
 Recall that each resolution actor $Res_{v_{out}}$ is parameterized with respect to the output $\s{A}$ when all incoming inputs for $Res_{v_{out}}$ are ``$\textsf{\textendash}$''\@. 
 The corresponding parameter assignment problem is encoded as a 
 \s{2QBF}\footnote{Quantified Boolean Formula with one top-level     
 	quantifier alternation.} 
 formula, where existential variables are the parameters to be synthesized, universal variables are input variables, and the quantifier-free body  is a logical implication specifying that the encoding of the system guarantees condition~2 and~3\@. 
 
 Step~\ref{step.3} shows a simplified algorithm for generating 2QBF constraints which does not perform unrolling. 
Stated in line~15, the quantifier free formula is of form $\Upsilon_{a} \rightarrow \Upsilon_{g}$, where 
$\Upsilon_{a}$ are input assumptions and system dynamics, and  $\Upsilon_{g}$ are properties to be guaranteed.


	First, unknown parameters are added to the set of existential variables~$V_{\exists}$ (line~2). All other variables are universal variables.
	Then based on the evaluation ordering of~$\Sys$, perform one of the following tasks:
	\begin{list1}
		\item When an element $\xi$ in the execution ordering $\Xi$ is a wire (line~5), we add \s{source} and \s{dest} as universal variables (as $V_{\forall}$ is a set, repeated variables will be neglected), and establish the logical constraint  $(\s{source} \leftrightarrow \s{dest})$ (lines~6 to~8)\@. 
		
		\item When an element $\xi$ in the execution ordering $\Xi$ is an actor, we use function \textsf{EncodeTransition} to encode the transition (pre-post) relation as constraints (line~11), and add all state variables (for pre and post) in the actor (recall our definition of Mealy machine is based on state variables) to $V_{\forall}$ using function  \textsf{GetStateVariable} (line~10)\@.
	\end{list1}
	%
	 $\Upsilon_{a}$ is initially set to $\varrho$ (line~1) to reflect the allowed input patterns regulated by the specification (specification type~P6)\@.
	Line~12 creates the constraint stating that no two inputs of a resolution actor should create contradicting conditions. As the number of input ports for any resolution actor is finite, the existential quantifier is only an abbreviation which is actually rewritten to a quantifier-free formula describing relations between input ports of a resolution actor. 

\setcounter{algocf}{2}
\begin{ProcessStep}[t]
	\begin{footnotesize}
		\SetKwInOut{Input}{Input}
		\SetKwInOut{Output}{Output}
		\Input{LTL specification $\phi = \varrho \rightarrow \bigwedge_{m=1\ldots k} \eta_m$,  $V_{in}$, $V_{out}$, actor-based (partial) controller $\Sys= (V_{in}, V_{out}, Act, \tau$)  and $\Map_{out}$ from Step~\ref{step.2.p123}. } 
		\Output{Actor-based (partial) controller $\Sys= (V_{in}, V_{out}, Act, \tau$) by adding more elements}
		
		\ForEach{$\eta_m$, $m=1\ldots k$}{		
			
			\If{$\emph{\s{DetectPattern}}(\eta_m)\in \{P4\}$}{
				Add a  $\s{size}(\eta_m.\phi^{i}_{in})$-input OR-gate actor $\s{OR}(\phi^{i}_{in})$  to $Act$\;
				\ForEach{ clause formula $\chi^i_{in}$ from DNF of $\eta_m.\phi^{i}_{in}$ }
				{
					
					$\C_{\chi}$ := \textsf{Syn}($(\G (\chi^i_{in}) \leftrightarrow \x^{i} \sig{out})) \wedge \bigwedge^{i-1}_{z=0} \x^{z} \s{dc} \wedge \x^{i}\G \neg \s{dc}$, $\s{In}(\chi^i_{in})$, \{\s{out}, \s{dc}\})\;
					
					Add $\A(\C_{\chi^i_{in}})$ to $Act$\;	
					
					\lForEach{$v_{in} \in \s{In}(\chi^i_{in})$}{
						$\tau := \tau \cup \{(\boxed{v_{in}} \dashrightarrow  \A(\C_{\chi^i_{in}}.v_{in}))\}$	
					}
				}
				$\tau := \tau \cup \{(\s{OR}_{\phi^{i}_{in}}.\s{out} \dashrightarrow  Res_{v_{out}}.{\s{input}_i})\}$\;
				
				\textbf{let} $v_{out}$ be the variable used in $\tau_{m}.\varrho_{out}$, $\s{ind}$ := $\Map_{out}$.get($v_{out}$).indexOf($m$)\;	
				
				\If	(\tcp*[h]{negation is used in literal}) {$\varrho_{out}$ equals $\neg v_{out}$}{
					Create a negation actor $\boxed{\neg}^{(m)}$ and add it to $Act$\;	
					Add ($\s{OR}_{\phi^{i}_{in}}.\s{out}  \dashrightarrow  \boxed{\neg}^{(m)}.\s{in}$), (		$\boxed{\neg}^{(m)}.\s{output} \dashrightarrow Res_{v_{out}}.{\s{input}_{\s{ind}}}$) to $\tau$\;
				}
				\lElse{
					Add ($\s{OR}_{\phi^{i}_{in}}.\s{out}  \dashrightarrow  Res_{v_{out}}.{\s{input}_{\s{ind}}}$) to $\tau$
				}
				
			} 
		}
		
		\caption{Synthesize monitoring controllers (for pattern P4)}
	\label{step.2.p4}	
	\end{footnotesize}
\end{ProcessStep}

\begin{ProcessStep}[t]
	\begin{footnotesize}

	\SetKwInOut{Input}{Input}
	\SetKwInOut{Output}{Output}
	\Input{LTL specification $\phi = \varrho \rightarrow \bigwedge_{m=1\ldots k} \eta_m$, input variables $V_{in}$, output variables $V_{out}$, partial controller implementation $\Sys= (V_{in}, V_{out}, Act, \tau$) with unknown parameters } 
	\Output{
		Controller implementation	$\Sys$  or ``\textsf{unknown}'' 
	}		


	\textbf{let} $\Upsilon_{a} := \varrho, \Upsilon_{g} := \s{true}$, $V_{\exists}, V_{\forall}$ := \textsf{NewEmptySet}();	
	
	\lForEach{$v_{out} \in V_{out}$}{
		$V_{\exists} := V_{\exists} \cup \{ Res_{v_{out}}.\s{A}  \}$ 
	} 
	

	\textbf{let} $\Xi$ be the evaluation ordering of $\Sys$ \;
	
	\ForEach{$\xi \in \Xi$}{
		
		\If(\tcp*[h]{$\xi$ is w wire; encode biimplication among two ports }){$\xi \in \tau$}{
			Let $\xi$ be $(\s{source} \dashrightarrow \s{dest})$\;
			$V_{\forall}.add(\s{source})$, 	$V_{\forall}.add(\s{dest})$\;	
			$\Upsilon_{a} := \Upsilon_{a} \wedge (\s{source} \leftrightarrow \s{dest})$\;	
		}
		 \Else{
		$V_{\forall}.add(\textsf{GetStateVariable}(\xi))$\;	
	
	$\Upsilon_{a} := \Upsilon_{a} \wedge (\textsf{EncodeTransition}(\xi) )$	/* $\xi \in Act$ */	 \;  
		
	}
	
}


\lFor{$v_{out} \in V_{out}$}{	
	$\Upsilon_{g} := \Upsilon_{g} \wedge (\not\exists i, j: (Res_{v_{out}}.\s{input}_{i}=\s{true}) \wedge (Res_{v_{out}}.\s{input}_{j} = \s{false}))$
}

\ForEach{$\eta_m$, $m=1\ldots k$}{		
	\lIf{$\emph{\textsf{DetectPattern}}(\eta_m) \in \{P5\}$}{
		$\Upsilon_{g} := \Upsilon_{g} \wedge \eta_m$
	}
}

\If{\emph{\textsf{Solve2QBF}}($V_{\exists}, V_{\forall},  \Upsilon_{a} \rightarrow \Upsilon_{g}$).\emph{\textsf{isSatisable}}}{		
	\Return $\Sys$ by replacing each $Res_{out}.\s{A}$ by the value of witness in 2QBF; 
}	\lElse{ \Return \s{unknown}}

\caption{Parameter synthesis by generating 2QBF constraints}
\label{step.3}

\end{footnotesize}

\end{ProcessStep}

The encoding presented in Step~\ref{step.3} does not involve unroll (it encodes the transition relation, but not the initial condition). Therefore, by setting all variables to be universally quantified, one approximates the behavior of the system dynamics without considering the relation between two successor states. Therefore, using Step~\ref{step.3} only guarantees \emph{soundness}: If the formula is satisfiable, then the specification is realizable (line~15,~16). Otherwise, \s{unknown} is returned (line~17).\footnote{Even without unroll, one can  infer relations over universal variables via statically analyzing the specification. As an example, consider two sub-specifications $S1: \G (\s{in1} \rightarrow (\s{out}\,\CU\,\s{in2}))$ and  $S2: \G (\s{in2} \rightarrow (\neg\s{out} \,\CU\, \s{in1}))$.  One can infer that it is impossible for $TrUB^{(1)}$ and $TrUB^{(2)}$ to be simultaneously have state variable $lock=\s{true}$, as both starts with $lock=\s{false}$, and if $S1$ first enters lock $(lock=\s{true})$ due to $\s{in1}$, the $S2$ cannot enter, as \s{release} part of $S2$ is also $\s{in1}$. Similar argument follows vice versa.}

As each individual specification of one of the types \{P1, P2, P3, P4\} is trivially realizable, the reason for rejecting a specification is (1) simultaneous \s{true} and \s{false} demanded by different sub-specifications,  (2) violation of properties over output variables (type P5), and (3) feedback loop within $\Sys$. Therefore, as  Steps~\ref{step.1}-\ref{step.3} guarantees non-existence of above three situations, the presented method is \emph{sound}.

\begin{theorem}{(\bf Soundness)}
	Let $\phi$ be a \GXW specification, and ${\cal S}$ be an actor-based controller as generated by Steps 1-4 from $\phi$; 
	then ${\cal S}$ realizes $\phi$\@.
\end{theorem}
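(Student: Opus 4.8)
The plan is to exploit the tight correspondence between the syntactic structure of $\phi$ and the topology of $\Sys$ established by the construction in Steps~\ref{step.1}--\ref{step.3}, reducing global correctness to a family of local correctness lemmas for the individual building blocks, which are then glued together via the cycle-wise Mealy-machine semantics of feedback-free \s{SDF}. Concretely, I would fix an arbitrary input sequence $a = a_0 a_1 \ldots$ with $a_i \in \mathbf{2}^{V_{in}}$ and assume $a \vDash \varrho$; let $b = b_0 b_1 \ldots$ be the output sequence produced by $\Sys$ and set $\sigma_i = a_i b_i$. Since $\Sys$ contains no feedback loops (Condition~1, discharged by the Tarjan check in Step~\ref{step.3}), the evaluation ordering $\Xi$ induces a well-defined, terminating per-cycle evaluation, so $b$ is well defined and $\Sys$ is a genuine Mealy machine. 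It then suffices to prove $\sigma \vDash \eta_m$ for every $m$, because $\sigma \vDash \varrho$ holds by assumption and $\phi = \varrho \rightarrow \bigwedge_m \eta_m$.

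The first local lemma concerns the monitors produced by Algorithm~\ref{algo.synthesis.monitor}. Using the shift-register invariant that each state variable $v_{in}[j]$ holds the value of $v_{in}$ exactly $j$ cycles in the past (and $\s{u}$ before such history exists), I would prove by induction on the cycle index that the monitor for a clause $\chi^i_{in}$ emits \s{true} at cycle $t \geq i$ precisely when $\sigma^{t-i} \vDash \chi^i_{in}$, while it emits \s{false} for all $t < i$. The padding conjunct $\bigwedge_{z=0}^{i-1}\x^z \neg \s{out}$ is exactly what enforces the latter, preventing the downstream high-level actor from being spuriously triggered before enough history has accumulated. Aggregating the per-clause monitors through the \s{OR}-gate then shows that the signal arriving at $\A^{(m)}.\s{input}$ faithfully tracks $\phi^i_{in}$ delayed by $i$ cycles.

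The second ingredient is the correctness of the high-level actors \s{InUB}, \s{TrUB}, \s{IfTB} together with the phase-adjustment controllers $\C_{\Theta_h}$. Here I would show that, when supplied with a correctly shifted \s{input} signal (and, for P2, a correctly shifted \s{release} signal), each high-level actor emits a possibly three-valued output whose every concretization satisfies the associated skeleton objective of Table~\ref{table.specification.skeleton}, and hence $\eta_m$ after substituting back the monitored atoms. For P2 the decisive point is that $\C_{\Theta_h}$ realizes $\Theta_h$ of Equation~\ref{eq:theta}: after \s{set} becomes \s{true} it forces $h$ cycles of \s{false} on \s{out} before passing its input through, which is exactly the phase shift needed so that the release monitor aligns with the $\x^i$ delay in $\G(\phi^i_{in} \rightarrow \x^i(\varrho_{out}\,\CU\,(\varphi^j_{in}\vee\rho^0_{out})))$; the example around Figure~\ref{fig:Time.Shifting} shows why omitting this shift would be unsound. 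Establishing $\Theta_h$ is, I expect, the most delicate local step, since it couples the monitor timing with the weak-until semantics of the trigger-until actor.

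Finally, the resolution actors and the \s{2QBF} check of Step~\ref{step.3} tie everything together. Each $Res_{v_{out}}$ forwards a concrete \s{true}/\s{false} whenever any incoming high-level actor demands one, and fills in the synthesized parameter $\s{A}$ only when every incoming value is the unconstrained ``$\s{\textendash}$''. I would argue that whenever $\sigma$ requires a specific value of $v_{out}$ for some active $\eta_m$, the corresponding input of $Res_{v_{out}}$ carries exactly that value, so $v_{out}$ inherits it; and when no sub-specification constrains $v_{out}$, any concretization of ``$\s{\textendash}$'' (including the witness value of $\s{A}$) is admissible. Soundness of Step~\ref{step.3} then supplies the remaining guarantees: a satisfying \s{2QBF} witness certifies that no $Res_{v_{out}}$ ever receives \s{true} and \s{false} simultaneously (Condition~2) and that the invariance formulas of pattern~P5 hold on the outputs (Condition~3). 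Because the body $\Upsilon_a \rightarrow \Upsilon_g$ encodes the per-cycle transition relation of the fully wired system with all state variables universally quantified, it over-approximates the reachable cycle behavior, so a witness valid under all valuations is a fortiori valid on the actual run $\sigma$. Combining the three local lemmas with these two guarantees yields $\sigma \vDash \eta_m$ for every $m$, whence $\sigma \vDash \phi$. The main obstacle is precisely this last gluing step: making rigorous that the three-valued ``$\s{\textendash}$'' propagation, the cycle-wise \s{SDF} composition, and the unroll-free over-approximation of Step~\ref{step.3} jointly suffice to discharge the weak-until obligations of P1 and P2 along the entire infinite run.
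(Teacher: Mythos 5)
Your proposal follows essentially the same route as the paper's proof: first establish well-definedness of the per-cycle evaluation from the absence of feedback loops and the \s{2QBF} guarantee that no resolution actor receives \s{true} and \s{false} simultaneously, then reduce $\sigma \vDash \phi$ to $\sigma \vDash \eta_m$ for each conjunct and discharge each one by tracing the dataflow through monitor, \s{OR}-gate, high-level actor, and resolution actor, with the invariance patterns (P5) handled directly by the \s{2QBF} constraint $\Upsilon_g$. The local lemmas you identify (the shift-register invariant for Algorithm~\ref{algo.synthesis.monitor}, the role of the padding conjunct $\bigwedge_{z=0}^{i-1}\x^z\neg\s{out}$, the phase alignment via $\Theta_h$, and the case split needed for the weak-until obligations of P1/P2) are exactly the ingredients of the paper's argument, so the plan is sound and matches the intended proof.
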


\vspace{-1mm}

\begin{wrapfigure}{r}{0.3\textwidth} 
	\vspace{-10pt}
	\begin{center}
		\includegraphics[width=0.3\textwidth]{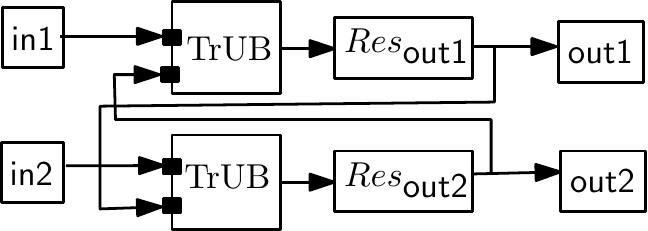}
		\caption{Incompleteness example.}
		\label{fig:feedback.loop}
	\end{center}
	\vspace{-20pt}
\end{wrapfigure} 
The \GXW synthesis algorithm as described above, however is {\em incomplete}, as controllers with feedback loops are rejected; that is, whenever output variables listed in the \s{release} part of~P2 necessitate simultaneous reasoning over two or more output variables. 
Figure~\ref{fig:feedback.loop} display an controller (with feedback loop) for realizing the specification $\G(\s{in1} \rightarrow (\s{out1} \CU \s{out2})) \wedge \G(\s{in2} \rightarrow (\s{out2} \CU \s{out1}))$\@. However,
our workflow rejects such a controller even though the given specification is
realizable. With further structural restriction over~\GXW (which guarantees no feedback loop in during construction) and by using unrolling of the generated actor-based controllers, the workflow as presented here can 
be made to be \emph{completeness}, as demonstrated
in~Section~\ref{sub.sec.properties}\@. 

\textbf{(Remarks)} Notice that in the presented algorithm, the synthesized controller does not contain a detector for checking if environment assumption (type P6) is violated. 
Practically, input assignments violating~P6 should never appear. 
If such a violation is possible, then immediately after the assumption violation and from that time onwards, the controller is allowed to produce arbitrary output assignment\footnote{In practice, if a violation is possible, it should be captured and handled properly; it is never the case that output variables are allowed to be assigned with arbitrary value.}. Here we omit details, but would like to stress that one can easily mediate such scenarios by several ways. E.g., one can append a detection actor (by building a combinational circuit out of the specification type~P6) to detect the event whether the environment assumption is violated. At the same time,  provide an additional input port on every resolution actor to override the output (i.e., a resolution actor should consider first if environment assumption is violated: if so, then continuously output~\sig{false}), and link the detection actor with the newly created port.  

Also, there are unlikely scenarios such as declaring output variables without having them used in any specification. One can mediate it by always connecting a wire from one input port to the output port of a unused output variable, without building a resolution actor. The presented algorithm here also omits details of such corner cases.

\vspace{-1mm}
\subsection{General Properties for \GXW Synthesis} \label{sub.sec.properties} 




Since unrealizability of a \GXW specification is due to the conditions (1) simultaneous \s{true} and \s{false} demanded by different sub-specifications, and (2) violation of properties over output variables (type P5)\footnote{Rejecting feedback loops on the controller structure is only a restriction of our presented method and is not the reason for unrealizability; similar to Figure~\ref{fig:feedback.loop}, feedback loop can possibly be resolved by merging all actors involving feedback to a single actor.}, one can build a counter-strategy\footnote{A counter-strategy in LTL synthesis a state machine where the environment can enforce to violate the given property, regardless of all possible moves by the controller~\cite{popl89}.}  
by first building a tree that provides input assignments to lead all runs to undesired states violating (1) or (2), then all leafs of the tree violating (1) or (2) are connected to a self-looped final state, in order to accept $\omega$-words. 
As the \s{input} part listed in Table~\ref{table.specification.skeleton} does not involve any output variable, a counter-strategy, if exists, can lead to violation of (1) or (2) within $\Omega$ cycles, where $\Omega$ is a number sufficient to let each \s{input} part of the sub-specification be \s{true} in a run.

\begin{wrapfigure}{r}{0.35\textwidth} 
	\vspace{-20pt}
	\begin{center}
		\includegraphics[width=0.35\textwidth]{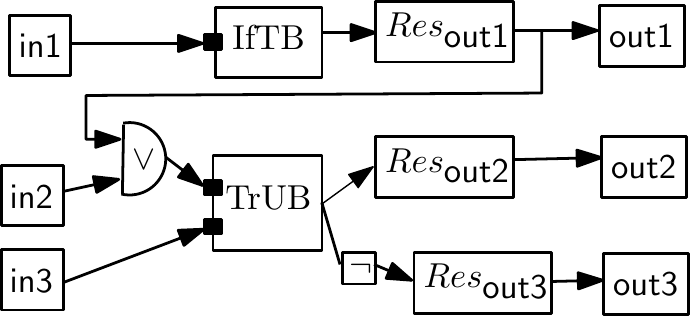}
		\caption{Control implementation.}
		\label{fig:Extension}
	\end{center}
	\vspace{-20pt}
	\vspace{1pt}
\end{wrapfigure}




\begin{lemma}\label{lemma.bound.no.output}
	\normalfont
	For \GXW specification $\varrho \rightarrow \bigwedge_{m=1\ldots k} \eta_m$, if (a) $\rho^0_{out}$ is \s{false} for all $\eta_m$ of type~P2 and (b) no specification of type P5 exists, then if the specification is not realizable, then there exists a counter-strategy which leads to violation of (1) or (2) in $\Omega$ steps, where $\Omega$ is bounded by the sum of  (i) the number of specifications $k$, and (ii)  the sum of all $i$ value defined within each $\phi^i_{in}$ of $\eta_m$.
\end{lemma}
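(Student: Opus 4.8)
The plan is to exploit the two hypotheses to collapse the synthesis game to an \emph{input-determined} one, and then give a scheduling argument for the length bound. First I would observe that under (a) and (b) every demand on an output variable is a function of the input word alone. By definition each trigger $\phi^i_{in}$ (and the P1 guard) mentions only input variables; hypothesis (a) reduces every P2 release $\varphi^j_{in}\vee\rho^0_{out}$ to the input-only $\varphi^j_{in}$; and hypothesis (b) deletes the invariants of pattern~P5, so condition~(2) can never be triggered. Hence, for each $v_{out}$ and each time $t$, whether some $\eta_m$ demands $v_{out}=\s{true}$ or $v_{out}=\s{false}$ is fixed by the input prefix and is \emph{independent} of the controller's own outputs; and since (a) also removes output variables from all release conditions, distinct output variables are mutually independent. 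A conflict of condition~(1) on some $v_{out}$ at some time is therefore a property of the input word alone, and $\phi$ is unrealizable exactly when some input word forces simultaneous \s{true}/\s{false} demands on one output.

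From this I would conclude that the counter-strategy need not react to the controller at all: the general tree described just before the lemma collapses to a single input \emph{path}, terminated by a self-loop to form the accepting $\omega$-word. It then remains only to bound the position at which the first conflict appears.

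For the bound I would take any input word witnessing a conflict on some $v_{out}$ at time $T$ and reschedule it into a compact prefix. The demands active at $T$ are induced by finitely many events: one firing of each P2/P3/P4 trigger whose demand must be on, together with continued avoidance of each P1 trigger whose demand must be on. To activate a spec's demand I realize a single satisfying DNF clause of its trigger $\phi^{i_m}_{in}$, which pins the inputs over $i_m+1$ consecutive positions (the $\x^{i_m}\,\s{true}$ padding of Step~\ref{step.2.p123} guarantees a clause of exactly this depth); already-activated P2 locks are kept on by never supplying their input-only release, and P1 demands are kept on by never matching their trigger. Firing the triggers one after another costs at most $\sum_{m}(i_m+1)=k+\sum_m i_m=\Omega$ positions, after which every required demand is simultaneously in force, the contradictory pair on $v_{out}$ is present, and appending the self-loop completes the counter-strategy.

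The delicate point, and the step I expect to be the main obstacle, is showing that this rescheduling is internally consistent: firing one trigger must neither inadvertently fire a P1 trigger we need to avoid nor supply a release that unlocks an already-active P2 demand. I would discharge this using the output-independence from (a): only the sub-specs mentioning the single conflicting $v_{out}$ matter, so all scheduled events concern one output, and the input positions not constrained by the current firing remain free to dodge the unwanted triggers and releases. Where two triggers genuinely demand incompatible inputs at one position, sequentialization is forced, and the additive accounting $\sum_m(i_m+1)$ already budgets one full window plus one commit step per spec---this is precisely the origin of the $\sum_m i_m$ term (the overlapping detection windows) and the $+k$ term (one commit per specification). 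The last subtlety---that a single firing per spec suffices rather than repeated firings---follows because each demand's status at $T$ depends only on its most recent firing together with the P2 lock bit, so no event need recur inside the compact prefix.
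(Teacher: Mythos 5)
Your proposal takes essentially the same route as the paper's own proof: both use hypotheses (a) and (b) to reduce unrealizability to an input-determined conflict of condition~(1), both construct the counter-strategy by firing the triggers' satisfying DNF clauses one after another while keeping the (now input-only) releases false, and both obtain $\Omega$ from the additive accounting of $i_m+1$ positions per specification. The ``delicate point'' you flag---that committing to one trigger's clause may inadvertently force another specification's release or forbidden trigger---is treated at exactly the same level of informality in the paper (which simply asserts that if the release cannot be kept false inside the compact window then no counter-strategy exists at all), so your argument is neither more nor less complete than the published one.
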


\noindent  When $\rho^0_{out}$ is \s{false} for all $\eta_m$ of type~P2, our presented construction guarantees no feedback loop. As no specification of type P5 exists, the selection of $\s{A}$ never influences whether the specification is realizable. Therefore, quantifier alternation is removed. To this end, checking unrealizability is equivalent to nondeterministically guessing $\Omega$ input assignments and subsequently, checking if a violation of (1) or (2) appears by executing $\Sys$. This also means that under the restriction from Lemma~\ref{lemma.bound.no.output}, a slight modification of Step~\ref{step.3} to perform unrolling the computation $\Omega$-times makes our synthesis algorithm {\em complete}\@.

\begin{lemma}\label{lemma.coNP}
	\normalfont
    Deciding whether a given \GXW specification, which also obeys the additional restrictions as stated in Lemma~\ref{lemma.bound.no.output}, is realizable or not is in \s{co-NP}\@.  	
\end{lemma}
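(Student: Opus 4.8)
The plan is to establish \s{co-NP} membership for realizability by showing that its complement, \emph{unrealizability}, lies in \s{NP}; since a language is in \s{co-NP} exactly when its complement is in \s{NP}, this suffices. I would exhibit a polynomially bounded certificate for unrealizability together with a polynomial-time verifier. First I would invoke Lemma~\ref{lemma.bound.no.output}: under restrictions~(a) and~(b), the specification is unrealizable iff there is a counter-strategy that forces a violation of~(1) or~(2) within $\Omega$ steps, and $\Omega$ is bounded by $k$ plus the sum of the $i$-indices occurring in the various $\phi^i_{in}$, hence linear in the size of the specification.

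The decisive structural fact, already used in that lemma, is that the \s{input} part of every sub-specification (Table~\ref{table.specification.skeleton}) contains no output variable. Consequently the environment's moves need not react to the controller's outputs, so the counter-strategy is \emph{oblivious}: it can be taken to be a single fixed input word $a_0 a_1 \ldots a_{\Omega-1}$ with $a_t \in \mathbf{2}^{V_{in}}$ rather than an adaptive tree. This collapse from a tree to a word is what I expect to be the main obstacle and the heart of the argument, since only an oblivious strategy yields a certificate of size polynomial in the specification, whereas an adaptive counter-strategy would in general be of exponential size. I would argue it carefully from the shape of the generated monitors (which depend only on input history) and from the fact that, under~(a) and~(b), the only possible cause of rejection is Condition~2: Condition~1 cannot arise because $\rho^0_{out}$ being \s{false} eliminates output feedback in the \s{release} wiring, and Condition~3 is vacuous because no type-P5 formula is present.

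The certificate is therefore the input word $a_0 \ldots a_{\Omega-1}$, of polynomial length. For the verifier I would simulate the actor-based controller $\Sys$ produced by Steps~1--4 on this word: following the static evaluation ordering $\Xi$, propagate the three-valued signals through each actor and wire for $\Omega$ cycles, and accept iff at some cycle some resolution actor $Res_{v_{out}}$ simultaneously receives \s{true} on one input port and \s{false} on another. Because no type-P5 formula is present, the value of each unknown parameter $\s{A}$ is irrelevant to whether such a conflict arises, and because $\rho^0_{out}$ is \s{false} there is no feedback loop, so the one-step update along $\Xi$ is deterministic and well defined without having to guess $\s{A}$.

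It remains to check the verifier runs in polynomial time. The number of actors and wires in $\Sys$ is linear in the specification (one monitor per clause, one high-level controller and one resolution actor per relevant formula and output), and by Algorithm~\ref{algo.synthesis.monitor} each monitor carries only $|\s{In}(\chi)|\cdot i$ three-valued state variables; hence a single evaluation cycle costs polynomial time, and $\Omega$ cycles likewise. Thus unrealizability is verified in polynomial time from a polynomial certificate, placing it in \s{NP} and realizability in \s{co-NP}, as claimed. As a sanity check, the argument matches the informal remark preceding the lemma that, once quantifier alternation over $\s{A}$ is removed, checking unrealizability reduces to nondeterministically guessing $\Omega$ input assignments and executing $\Sys$.
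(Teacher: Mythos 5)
Your proposal is correct and follows essentially the same route as the paper's proof: both reduce realizability to its complement, use Lemma~\ref{lemma.bound.no.output} for the bound $\Omega$, observe that under restrictions~(a) and~(b) the resolution-actor inputs (and hence any conflict) are driven solely by the input word so that the parameter $\s{A}$ and quantifier alternation drop out, and then verify a guessed input sequence of length $\Omega$ by simulating the generated actor system in polynomial time. The only presentational difference is that you phrase this as an \s{NP} certificate-and-verifier argument, whereas the paper phrases it as a bounded unroll of the actor system into a quantifier-free Boolean constraint; these are the same argument.
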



\noindent 
For the general case, the bound in Lemma~\ref{lemma.bound.no.output} remains valid (as \s{input} part is not decided by the output variable). 
Complexity result is achieved by, without using our construction, directly using finite memory to store and examine all possible control strategies in $\Omega$ steps. 


\begin{lemma}\label{lemma.bound.output}
	\normalfont
	For \GXW specification $\varrho \rightarrow \bigwedge_{m=1\ldots k} \eta_m$,   if the specification is not realizable,  then there exists a counter-strategy which leads to violation of (1) or (2) in $\Omega$ steps, where $\Omega$ is bounded by condition similar to Lemma~\ref{lemma.bound.no.output}. 
	
\end{lemma}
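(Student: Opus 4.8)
The plan is to re-use the counter-strategy construction sketched in the paragraph preceding Lemma~\ref{lemma.bound.no.output} and to argue that dropping its two restrictions~(a) and~(b) does not force the environment to play longer than $\Omega$ steps. The single fact that drives everything is that every \emph{trigger} of a sub-specification---the \s{input} part $\phi^i_{in}$ of patterns P1--P4, and of P2 in particular---mentions only input variables. Consequently the environment, which owns $V_{in}$, can schedule the firing of any trigger independently of the controller's output choices. First I would make the counter-strategy explicit as a finite tree whose edges are labelled by input assignments and whose branching is caused only by the (at most $2^{|V_{out}|}$) possible controller responses; a leaf is a state in which condition~(1) (some resolution actor is fed \s{true} and \s{false} simultaneously) or condition~(2) (a P5 invariant on outputs is violated) already holds, and each such leaf is closed off by a self-loop to obtain an $\omega$-accepting counter-strategy exactly as before.

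Next I would bound the depth of this tree. A violation of~(1) or~(2) is witnessed by a finite set of sub-specifications being simultaneously \emph{active}: for P2 this means the trigger has fired and the \s{release} has not yet occurred, so the locked constraint $\varrho_{out}$ is in force, whereas for P3/P4/P5 the demand on the output is instantaneous. To activate $\eta_m$ the environment only has to feed a witness word for $\phi^{i_m}_{in}$, which by the modal-depth bound costs at most $i_m$ next-steps plus one activation step; firing the relevant triggers one after another therefore costs at most $\Omega = k + \sum_m i_m$, exactly the quantity of Lemma~\ref{lemma.bound.no.output}. Because the triggers are input-only, the controller can neither delay nor prevent any of these activations, so the environment can realise the whole schedule within $\Omega$ steps on \emph{every} branch of the tree.

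The step I expect to be the main obstacle is precisely the one excluded by restriction~(a): in the general case the \s{release} part $\varphi^{j}_{in} \vee \rho^0_{out}$ may contain the output literal $\rho^0_{out}$, so the controller can try to \emph{escape} a P2 lock by producing an output that satisfies the release, and this is exactly the feedback that the structural construction rejects. I would argue that this escape route cannot buy the controller more than the bounded branching already records. The key observations are that a release can only \emph{relax} a constraint, and that producing the escaping output is itself an event over $V_{out}$: along any branch on which the controller releases a lock, either that very output already collides with another active demand (condition~(1)) or violates a P5 invariant (condition~(2)), or the escaped lock was not the one needed for the conflict and the environment simply continues its pre-committed schedule. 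Hence the environment never has to \emph{re-trigger} and never reacts to a release by lengthening its schedule. Formally, I would take a counter-strategy of minimal depth and show by a K\"onig-type argument on the finite tree that if some branch required more than $\Omega$ steps to reach a violation, then along that branch the controller could in fact avoid~(1) and~(2) forever, contradicting unrealizability. Finally, since a P5 violation is an instantaneous single-step property on outputs, the presence of P5 only enlarges the set of bad configurations and adds no depth; this closes the bound at $\Omega$.
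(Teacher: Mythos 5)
Your overall plan matches the paper's: keep the input-only trigger schedule and the bound $\Omega = k + \sum_m i_m$ from Lemma~2, and isolate the new difficulty as the controller's ability to escape a P2 lock by satisfying $\rho^0_{out}$. However, the step where you discharge that difficulty has a genuine gap. Your case analysis for a branch on which the controller releases a lock covers (i) the escaping output itself causes a violation, and (ii) the escaped lock was not the one needed for the conflict; it omits the critical case in which the escaped lock \emph{was} the one needed and the escaping output is harmless. In that case the environment's pre-committed schedule simply fails on that branch, and your fallback --- ``take a minimal-depth counter-strategy and show by a K\"onig-type argument that a branch longer than $\Omega$ would let the controller avoid (1) and (2) forever'' --- is exactly the statement of the lemma, asserted rather than proved. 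K\"onig's lemma gives you finiteness of a minimal counter-strategy tree, not the specific bound $\Omega$.

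The ingredient the paper uses to close this case, and which your proposal never states, is that no $\x$ operator is ever applied to an output variable anywhere in a \GXW specification. Hence the controller's output assignment at cycle $\alpha$ has no effect at cycle $\alpha+1$ beyond the lock/release state: whether the controller can satisfy $\rho^0_{out}$ (together with all currently locked constraints and P5 invariants) at a given cycle is a purely combinational question at that cycle. The paper phrases this as an ``imaginary SAT solver'' run independently at each time point: within the window in which a lock is active, the controller has a fixed finite number of \emph{independent} opportunities to make the release true, one per cycle; either one of them succeeds (and then no conflict arises on this schedule), or none does and the conflict is forced no later than the end of the window. This per-cycle independence is what rules out the scenario your argument leaves open --- the controller using its output freedom to defer the violation past $\Omega$ --- and without it your proof does not go through.
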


\vspace{-2mm}

\begin{lemma}\label{lemma.PSPACE.complete}
	\normalfont
	Deciding whether a given \GXW specification is realizable or not is in \s{PSPACE}\@.  
\end{lemma}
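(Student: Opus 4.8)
The plan is to reduce realizability to the evaluation of a polynomially-sized quantified Boolean formula and then appeal to the fact that \s{TQBF} is decidable in polynomial space. First I would invoke Lemma~\ref{lemma.bound.output}, which guarantees that the counter-strategy horizon $\Omega$ stays polynomially bounded even in the general case, precisely because the \s{input} parts driving every trigger depend on input variables alone. By the contrapositive of that lemma, the specification $\varrho \rightarrow \bigwedge_{m} \eta_m$ is realizable if and only if the controller wins the finite-horizon game in which, for each cycle $t = 0, \ldots, \Omega$, the environment first selects an input assignment $a_t \in \mathbf{2}^{V_{in}}$ and the controller responds with $b_t \in \mathbf{2}^{V_{out}}$, and the controller wins precisely when the resulting prefix never forces a violation of condition~(1) (two sub-specifications demanding conflicting \s{true}/\s{false} values for a shared output) or condition~(2) (violation of a type-P5 invariant), under the assumption that $\varrho$ (type~P6) holds throughout.

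Next I would encode this game directly, \emph{without} appealing to the structured controller, as the quantified Boolean formula
\[
\forall a_0\, \exists b_0\, \forall a_1\, \exists b_1 \cdots \forall a_{\Omega}\, \exists b_{\Omega}\; \bigl(\Psi_{P6} \rightarrow \Psi_{\mathrm{safe}}\bigr),
\]
where $\Psi_{P6}$ asserts that the input prefix respects the environment assumption $\varrho$, and $\Psi_{\mathrm{safe}}$ is the quantifier-free predicate expressing that no conflicting \s{true}/\s{false} pair is demanded for any shared output (condition~(1)) and that every P5 invariant holds along the prefix $(a_0 b_0)\ldots(a_{\Omega} b_{\Omega})$ (condition~(2)). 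Because each $\eta_m$ contains only polynomially many \x operators and $\Omega$ is polynomial, the trace prefix has polynomial length; evaluating the \s{input} monitors, the \CU-locking states, and the local P5 checks along this prefix is a polynomial-time computation, so $\Psi_{P6} \rightarrow \Psi_{\mathrm{safe}}$ is of polynomial size.

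Finally, the resulting formula has $2(\Omega + 1)$ quantifier alternations over blocks of $|V_{in}|$ and $|V_{out}|$ variables each, i.e. it is an instance of \s{TQBF} of polynomial size, which is evaluated by the standard recursive depth-first procedure: the recursion has depth $2(\Omega+1)$, and each stack frame stores a single assignment of polynomial width together with the running monitor state, so the total working memory is polynomial. Hence realizability is decidable in \s{PSPACE}. In contrast to Lemma~\ref{lemma.coNP}, the quantifier alternation cannot be collapsed here, since type-P5 invariants and the free resolution parameter \s{A} make the controller's output choices genuinely consequential; this is exactly what raises the bound from \s{coNP} to \s{PSPACE}.

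I expect the main obstacle to be justifying that the polynomial horizon $\Omega$ is \emph{sound for realizability} in the general setting, not merely for the restricted regime of Lemma~\ref{lemma.bound.no.output}. The difficulty is that when the \s{release} part $\rho^0_{out}$ of a type-P2 formula mentions output variables, the weak-until obligations of distinct sub-specifications become mutually coupled (the feedback situation of Figure~\ref{fig:feedback.loop}), so one must show that any winning strategy for the $\Omega$-step game extends to an infinite-horizon one. I would close this gap by observing that every triggering \s{input} event fires within $\Omega$ cycles and that the locking configurations of the weak-until obligations stabilize thereafter; once stable, the controller can freeze its outputs to the values used at step $\Omega$, so no fresh conflict or invariant violation can first arise beyond the horizon, and a strategy safe for $\Omega$ steps is safe forever.
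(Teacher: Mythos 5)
Your proof is correct and rests on the same foundation as the paper's --- the polynomial counter-strategy horizon $\Omega$ from Lemma~\ref{lemma.bound.output} --- but the two arguments close the complexity gap differently. The paper exploits the fact that the \s{input} and \s{release} triggers are driven by input variables alone, so an environment counter-strategy can be taken to be an \emph{oblivious} input sequence of length $\Omega$: unrealizability is then decided by nondeterministically guessing that sequence and checking in polynomial space that \emph{all} output sequences lead to a conflict, giving \s{NPSPACE} and hence \s{PSPACE} by Savitch's theorem. You instead keep the full alternation $\forall a_0 \exists b_0 \cdots \forall a_\Omega \exists b_\Omega$ and reduce to a polynomial-size \s{TQBF} instance evaluated by the standard recursive procedure. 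Your route is more conservative --- it does not need the obliviousness observation, and it correctly models Mealy semantics where $b_t$ may depend on $a_0,\ldots,a_t$ --- at the cost of a formally larger quantifier prefix that the paper avoids; both land in \s{PSPACE}. One caveat: your closing patch for extending an $\Omega$-step winning strategy to an infinite one (freezing outputs after all triggers ``stabilize'') is the weakest link, since the environment can keep firing triggers forever and the locking configurations need not stabilize; the correct justification is simply the contrapositive of Lemma~\ref{lemma.bound.output} itself (any forced violation must manifest within some $\Omega$-window, so absence of a forced violation in every such window already yields realizability), which you have in any case invoked, so the supplementary argument is unnecessary rather than damaging.
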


\noindent The above mentioned bounds are only conditions to detect realizability of a \GXW specification, while our presented workflow in Section~\ref{sec.algorithms} targets generating structured implementations. Still, by unrolling the computation $\Omega$-times, one can detect if a controller, following our regulated structure, exists.

\subsection{Extensions}
\vspace{-2mm}
One can extend the presented workflow to allow richer specification than previously presented \GXW fragment. Here we outline how these extensions are realized by considering the following sample specification: $\G (\s{in1} \rightarrow \s{out1}) \wedge \G ((\s{in2} \vee \s{out1}) \rightarrow ((\s{out2} \wedge \neg \s{out3}) \CU\,\s{in3}))$. The  \s{SDF} controller implementation is shown in Figure~\ref{fig:Extension}.
%
	First, conjunctions in $\varrho_{out}$ can be handled by considering each output variable separately. E.g., for $\varrho_{out} \equiv \s{out2} \wedge \neg \s{out3}$, in Figure~\ref{fig:Extension} both are connected to the same TrUB. 
	Second, the use of output variables in ``\s{input}'' part for pattern P1, P2, P3 is also supported, provided that in effect a combinatorial circuit is created (i.e., output variables should always proceed with $\x^i$), and the generated system does not create a feedback loop. E.g., for the antecedent $(\s{in2} \vee \s{out1})$, it is created by wiring the $Res_{out1}.\s{out}$ to an OR-gate. 

\vspace{-3mm}
\section{Experimental Evaluation}\label{sec.evaluation}
\vspace{-2mm}

We implemented a tool for $\GXW$ synthesis in \s{Java}, which invokes $\s{DepQBF}$~\cite{depqbf} (Version 5.0) for \s{QBF} solving. Table~\ref{table.experimental.result} includes experimental results for a 
representative subset of our \s{PLC} benchmark examples. 
Execution times is recorded using Ubuntu VM (Virtual Box with 3GB RAM) running on an Intel i7-3520M 2.9 Ghz CPU and 8GB RAM)\@.
Most control problems are solved in less than a second\footnote{Approximately 0.25 seconds is used for initializing JVM in every run.}.
\GXW synthesis always generated a controller without feedback loops for 
all  examples\@.




Table~\ref{table.experimental.result} lists a comparison of execution times of 
\GXW synthesis and the bounded LTL synthesis tool Acacia+~\cite{acacia12} (latest version 2.3)\@.  
We used the option \texttt{--player 1} of Acacia+ for forcing the environment to
take a first move, but we  did not do manual annotation in order to support compositional synthesis in Acacia+, as it is not needed by our tool\@.
For many of the simpler case studies, the reported runtimes of Acacia+ are similar to \GXW synthesis. However, \GXW seems to
scale much better to more complex case studies with a larger number of input and output variables such as examples 5,  9, 11, 12, 13, 15, 16, 17, 18, 19 in Table~\ref{table.experimental.result}\@.
The representation of the generated controller in terms of a system of interacting actors in \GXW synthesis, however,
allows the engineering to trace each sub-specification with corresponding partial implementation. In fact the structure of the  controllers generated by \GXW is usually similar to reference implementations by the case study providers.
In contrast,  a controller  expressed in terms of single Mealy machine is
rather difficult to grasp and to maintain for problems such as example 18 with 13 input and 13 output variables.




\section{Related Work} \label{sec.related}
\vspace{-2mm}

Apart from the description in Section~\ref{sec.introduction}, here we  compare \GXW synthesis with related \s{GR(1)} synthesis (e.g.,~\cite{anzu,gr1,gr1robots,ratsy}) 
and bounded LTL synthesis (e.g,~\cite{acacia12,Ehlers11,ScheweF07a}) techniques\@.

Synthesis for the \s{GR(1)} fragment of \s{LTL} is in time polynomial to the number of nodes of a generated game, which is \s{EXPTIME} when considering exponential blow-up caused by input and output variables. \GXW is in \s{PSPACE}\@, where \GXW allows $\CU$ and $\s{GR(1)}$ allows $\F$.
Even though it has been demonstrated that the expressiveness of \s{GR(1)} is enough to cover many practical examples,
the use of an until logical operator, which is not included in \s{GR(1)}, proved to be essential for 
encoding a majority of our \s{PLC} case studies.
Also, implementations of \s{GR(1)} synthesis such as Anzu~\cite{anzu} do not generate structured controllers. 
Since \s{GR(1)} synthesis, however, includes a round-robin arbiter for circulating among sub-specifications, 
the systematic structuring of controllers underlying \GXW synthesis may be applicable for synthesizing structured \s{GR(1)} controllers\@.


Bounded synthesis supports full LTL and is based on a translation of the LTL synthesis problem to safety games. By doing so, one solves the safety game and finds smaller controllers (as demonstrated in synthesis competitions via tools like Simple BDD solver~\cite{comptition2014}, AbsSynthe~\cite{AbsSynthe}, Demiurge~\cite{demiurge}).
The result of solving safety games in bounded \s{LTL} synthesis usually is a monolithic Mealy (or Moore) machine, 
whereas our \GXW synthesis method of creating  \s{SDF} actors may be understood as a way of avoiding the expensive
construction of the product of machines.
Instead, we are  generating controllers by means of  wiring smaller sub-controllers for specific monitoring and 
event triggering tasks.
The structure of the resulting controllers seem to be very close to what is happening in practice, as a number of
our industrial benchmark examples are shipped with reference implementation which are usually structured in a similar way. The size of the representations of generated controllers is particularly important when
considering resource-bounded embedded computing device such as a \s{PLC}s.
\s{LTL} component synthesis, however, has the same worst-case complexity 
as full \s{LTL} synthesis~\cite{synthesiscomponent}\@.

\begin{table}[t]
	\centering
\begin{scriptsize}
		\begin{tabular}{|c|l|l|c|c|c|}
			\hline
			ID & Description & Source & I/O vars &   \GXW Time (s) & Acacia+ Time (s) \\ \hline
			1	& Automatic Door & Ex15 \cite{source1blogspot} & (4,3) & 0.389 & 0.180 \\ \hline
			2	& Simple Conveyor Belt & Ex7.1.19 \cite{source3petry} & (3,3) & 0.556 & 0.637 \\ \hline
			3	& Hydraulic Ramp & Ex7.1.3\cite{source3petry} & (5,2)  & 0.642 & 0.451\\ \hline
			4	& Waste Water Treatment V1 & Ex7.1.8 \cite{source3petry} & (6,3)  & 0.471 & 0.323 \\ \hline
			5	& Waste Water Treatment V2 & Ex7.1.9 \cite{source3petry} & (8,9) & 0.516 & 5.621 \\ \hline
			6	& Container Fusing & Ex10 \cite{source1blogspot}  & (7,6) & 0.444 & 0.425 \\ \hline
			7	& Elevator Control Mixing Plant & Ex7.1.4 \cite{source3petry} & (10,5) & 0.484 & 2.902 \\ \hline
			8	& Lifting Platform & Ex21 \cite{source2kaftanABB} & (6,3) & 0.350 & 0.645 \\ \hline
			9	& Control of Reversal & Ex36 \cite{source2kaftanABB} & (7,7) & 0.395 & 2.901 \\ \hline
			10	& Gear Wheel & Ex19 \cite{source2kaftanABB} & (4,6) &  0.447 & 0.302  \\ \hline
			11	& Two Directional Conveyor (simplified) & Ex7.1.31.1 \cite{source3petry} & (9,5) & 0.789 & 6.552 \\ \hline
			12	& Garage Door Control & Ex7.1.25 \cite{source3petry} & (13,5) & 0.574 & 7.002 \\ \hline
			13	& Contrast Agent Injection & Ex7.1.18 \cite{source3petry} & (6,8) & 0.458 & 3.209 \\ \hline
			14	& Identification & Ex39 \cite{source2kaftanABB} & (5,5) & 0.430 & 0.392 \\ \hline
			15	& Monitoring Chain Elevator & Ex7.1.15\cite{source3petry} & (10,9) & 0.429 & 9.647 \\ \hline
			16	& Two Directional Conveyor  & Ex7.1.31.1 \cite{source3petry} & (12,5) & 0.890 & 51.553 \\ \hline
			17	& Control of single torque drive (simplified) & Ex7.1.26 \cite{source3petry} & (12,8) & 0.538 & 38.010\\ \hline
			18	& Gravel transportation via 3 conveyors (simplified) & Ex7.1.31.4  \cite{source3petry} & (13,13) & 1.227 & $>600$ (t.o.)\\ \hline
			19	& Control of two torque drives (simplified) & Ex7.1.26  \cite{source3petry} & (22,16) & 0.790 & $>600$ (t.o.)\\ \hline
		\end{tabular}
		\vspace{1mm}
		\caption{Experimental Result}
		\label{table.experimental.result}
		\vspace{-2mm}
\end{scriptsize}
\end{table}

\vspace{-1mm}
\section{Conclusion}\label{sec.conclusion}
\vspace{-0mm}

We have identified a useful subclass~\GXW of \s{LTL} for specifying a 
large class of embedded control problems, and we developed a novel
synthesis algorithm (in \s{PSPACE}) for automatically generating 
structured controllers in a high-level programming language with
synchronous dataflow without cycles. 
Our experimental results suggest that \GXW synthesis scales 
well to industrial-sized control problems with around~20 
input and output ports and beyond.

In this way, \GXW synthesis can readily be integrated with industrial
design frameworks such as CODESYS~\cite{codesys}, Matlab Simulink, 
and Ptolemy~II, and the generated SDF controllers (without cycles) can be statically scheduled and implemented on single and multiple processors~\cite{sdfscheduling}\@.
It  would also be interesting to use our synthesis algorithms to automatically generate control code from established requirement frameworks for embedded 
control software such as EARS\cite{ears}\@.
Moreover, our presented method supports traceability between specifications and 
the generated controller code as required by safety-critical applications.
Traceability is also the basis for an incremental development methodology. 

One of the main impediments of using synthesis 
in engineering practice, however, is the lack of useful and automated feedback
in case of unrealizable specifications~\cite{assumption,DBLP:conf/cav/ChengHRS14,specification.mining} or realizable specifications with unintended realizations. 
The use of a stylized specification languages such as \GXW seems to 
be a good starting point 
for supporting design engineers in identifying and analyzing 
unrealizable specifications, since there are only a relatively
small number of potential sources of unrealizability 
in \GXW specifications\@\footnote{Observe the example in the paper, \s{closing\_stopped} contains \s{out0}. This is the only part that is not mentioned in the textural specification, but it is required to make the specification realizable. Introducing \s{out0} needs creativity, and it is the part where one needs an “engineer in the loop”. Not mentioned in this paper, we are also developing concepts in order to automatically add such a disjunction, similar to discovery of environment assumptions as investigated by us and also by others (e.g.,~\cite{specification.mining}).}. Finally, hierarchical \s{SDF} may
also be useful for modular synthesis~\cite{sdfcomposition}\@.

\subsection*{Acknowledgement}
We thank L\u acr\u amioara~A\c stef\u anoaei for her feedback during the development of the paper.








\appendix

\noindent \begin{Large}{\textbf{Appendix}}\end{Large}


\section{Operational Semantics of \s{SDF}}

The operational semantics of \s{SDF} can be summarized using the below action sequence; one can use the example in Figure~\ref{fig:Actor.Composition}(a)(b) to ease understanding.  
\vspace{3mm}

\begin{minipage}{0.95\textwidth}
		\begin{small}
			\begin{enumerate}[(i)]
				\item All ports start with initial value \s{undefined}; 
				\item A cycle is started by reading inputs andb setting the
 external input ports to either \s{true} or \s{false};
	\item For all wires connected to the same source (external input port / internal output port), copy data to the
 connected destination (internal input port / external output port). Lastly, change the value for the source port to be
 \s{undefined}.   
	\item When values of all input ports of an actor are not \s{undefined}, 
 produce output and update to the corresponding output ports, by executing the
 underlying Mealy machine of the actor.
	\item When all external output ports are \s{true} or \s{false} and all
 internal ports are \s{undefined}, proceed to Step~(vi). Otherwise, continue with  Step~(iii). 	
	\item Produce output based on the data in the external output port,
 where each port can only be \s{true} or \s{false} following Step~(v). 
 Reset each external output port to \s{undefined}, and move to Step~(ii).
 \end{enumerate}
	\end{small}
	\end{minipage}

\section{Soundness}\label{sub.sec.correctness.no.unroll} 

We prove that the if a controller $\Sys$ is produced following the workflow from Step~\ref{step.1} to Step~\ref{step.3}, then it is correct, meaning that it realizes the \GXW specification $\varrho \rightarrow \bigwedge_{m=1\ldots k} \eta_m$.

The correctness proof can be understood using the following structure: (A) Prove that the behavior the controller is well-defined, i.e., given any (infinite) input sequence, the controller can generate an infinite output sequence such that the combined sequence forms an $\omega$-word. (B) As the specification under synthesis has the structure $ \varrho \rightarrow \bigwedge_{m=1\ldots k} \eta_m$ where $\varrho$ is a property over input variables, it suffices to prove individually that all produced $\omega$-words satisfy $\eta_m$. For each $\eta_m$, we then prove that the created partial dataflow model realizes $\eta_m$. We use operational semantics to discuss the data transfer in each cycle; an alternative method is to view the data processing in each cycle analogous to applying functional composition.

\paragraph{(A)} The internal dataflow of a synthesized controller, due to the sanity check of Condition~1 in Section~\ref{sub.sec.algorithm.parameter.syn.no.unroll}, obeys the following structure (here we omit the logic gates): external input ports $\Rightarrow$  monitor controllers $\Rightarrow$ high-level controllers $\Rightarrow$ resolution actors $\Rightarrow$ external output ports. 


The satisfiability of 2QBF  guarantees that for each output variable $v_{out}$, under any input assignment,  $Res_{v_{out}}$ cannot receive from two input ports \s{true} and \s{false}. Therefore, the output value of $Res_{v_{out}}$ is well-defined, and as $Res_{v_{out}}.\s{out} \dashrightarrow \boxed{v_{out}}$, the value of $\boxed{v_{out}}$ at the end of a cycle is either updated to $\s{true}$ or to $\s{false}$. 

\paragraph{(B, Specification Type~5)} For specification $\eta_m$ which is of type~5 (invariance condition), they are guaranteed by line~15 of Step~\ref{step.3}. 

\paragraph{(B, Specification Type~3)} In Step~\ref{step.2.p123}, the algorithm synthesizes the monitor controller $\G (\chi^i_{in} \leftrightarrow \x^{i} \sig{out}) \wedge \bigwedge_{k=0 \ldots i-1} \x^k \neg \sig{out}$ and connect the controller to external input ports. The output of the monitor is connected to an OR-gate, which then connects to the input port ($\s{input}$) of the high-level controller realizing $\G (\s{input} \rightarrow \s{output})$. The output  of the high-level controller is connected to
one of the input ports of $Res_{v_{out}}$ (when $\varrho_{out} \equiv \neg v_{out}$, a negation actor is inserted in between), which then produces output to $v_{out}$. 	


Our goal is to prove that the composition of these sub-controllers via wiring of ports is a controller realizing $\G (\phi^i_{in} \rightarrow \x^i \varrho_{out})$. 
Based on the definition, the synthesized controller realizes $\G (\phi^i_{in} \rightarrow \x^i \varrho_{out})$, if for all produced $\omega$-words satisfies the property $\G (\phi^i_{in} \rightarrow \x^i \varrho_{out})$, i.e.,

\begin{description}
	\item[(\G)]  for all $j\geq 0$,  $\phi^i_{in} \rightarrow \x^i \varrho_{out}$ holds 
	\item[($\rightarrow$)] for all $j\geq 0$, if $\phi^i_{in}$ holds then $\x^i \varrho_{out}$ holds
	\item[($\x$), proof goal]  for all $j\geq 0$, if $\phi^i_{in}$ holds at time~$j$, then $\varrho_{out}$ holds at $j+i$
\end{description}

Based on the definition, $\varrho_{out}$ is either $v_{out}$ or $\neg v_{out}$, where $v_{out}$ is an output variable. Here we prove only for case where $\varrho_{out} \equiv v_{out}$; for case $\varrho_{out} \equiv \neg v_{out}$ only a negation actor is introduced and the proof is similar.
For the synthesized controller:

\begin{description}
	\item[[time $j$, monitor]] The specification of the monitor contains $\G (\chi^i_{in} \leftrightarrow \x^i \s{out})$. Thus it guarantees that for all $j\geq 0$, if $\chi^i_{in}$ holds (does not hold) at time~$j$, then at $j+i$, the  value of $\s{out}$ is computed to $\s{true}$ ($\s{false}$) after executing the monitor.

	\item[[time $j$, OR gate]] All monitors are connected to an OR gate, mimicking the formula structure. It guarantees that for all $j\geq 0$, if $\phi^i_{in}$ holds at time~$j$ (due to one of its sub-formula $\chi^i_{in}$ being $\s{true}$ at time~$j$), then output port $out$ of the OR gate is $\s{true}$ at $j+i$ (as the port \s{out} of the monitor is $\s{true}$ at time $j+i$, and it is wired to an input of the OR-gate). 
	
	Similarly, if $\phi^i_{in}$ does not hold at time~$j$ (due to all of its sub-formula $\chi^i_{in}$ being $\s{false}$ at time~$j$), then output port $out$ of the OR gate has value \s{false} at $j+i$ (as every port \s{out} of the monitor is $\s{false}$ at time $j+i$, and it is wired to an input of the OR-gate).

	\item[[time $j+i$, high-level controller] ] The high-level controller IfTB realizes $\G (\s{input} \rightarrow \s{output})$, which guarantees that at time $j+i$, if $\s{input}$ holds then $\s{output}$ holds. As the output port of the OR-gate is wired to IfTB.\s{input},  if $\phi^i_{in}$ holds at time~$j$, then at time $j+i$ IfTB.\s{output} is updated to $\s{true}$, after executing IfTB.
	
	\item [[time $j+i$, $Res_{v_{out}}$, proof goal achieved]] At time $j+i$, whenever a resolution actor $Res_{v_{out}}$ receives a \s{true} from the input, the execution of $Res_{v_{out}}$ produces the \s{true} to $Res_{v_{out}}.\s{out}$, thus for output port $v_{out}$, it is updated to $\s{true}$ at time $j+i$. As $\varrho_{out} \equiv v_{out}$, $\varrho_{out}$ holds at time $j+i$.
\end{description}

\paragraph{(B, Specification Type~4)} Proof similar to type~3, by first decomposing the specification formula, followed by showing that the dataflow guarantees desired behavior.

\paragraph{(B, Specification Type~1)} (Informal sketch; to ease understanding) Again the proof follows the structure of decomposing the specification formula,  followed by showing that the dataflow guarantees desired behavior. As the controller monitoring event $\phi_{in}$ is connected to the InUB block, to guarantee correctness, before observing event $\phi_{in}$, InUB should always set output to \s{true}. In line~6 of Step~\ref{step.2.p123}, the first~$i-1$ output of the monitor is $\s{false}$. Therefore, InUB does not change to "\textsf{\textendash}($\s{\textendash}$)" within time $0$ to $i-1$. From time~$i$ onwards,  $\phi_{in}$ is well defined and correctness is guaranteed. 

\vspace{2mm}
\noindent (Formal argument) Here we again consider $\varrho_{out} \equiv v_{out}$, as the other case ($\varrho_{out} \equiv \neg v_{out}$) is analogous. 
Based on the definition, the synthesized controller realizes $\varrho_{out} \CU \phi^i_{in}$, if for all produced $\omega$-words satisfies the property $\varrho^0_{out} \CU \phi^i_{in} \equiv (\varrho^0_{out} \U \phi^i_{in}) \vee \G\, \varrho^0_{out}$, i.e., it satisfies $(v_{out}\,\U\,\varphi^{i}_{in})$ or $\G\, v_{out}$.
We now  consider whether an input sequence makes $(\varphi^{i}_{in})$ holds or not - this condition partitions all input sequences to two categories, and subsequently, make either left of right part of the disjunction hold: 

\begin{description}
	\item [($\vee $, left)] Assume there $\exists j$ such that $\varphi^{i}_{in}$ first holds at $j$, we prove that an  implementation following the construction guarantees that  $v_{out}$ is \s{true} from 0 to $j-1$, thereby satisfying the strong-until part (\U). 
	
	As $\varphi^i_{in}$ uses consecutive $i$ \x operators, although at cycle $j$
	the output can no longer produce output $\s{true}$, as the controller cannot perform clairvoyance over future inputs, it needs to continuously output \s{true} until cycle $j+i$, such that it can decide $\chi^i_{in}$ holds at time $j$. In other words, as $j$ is the first time where $\chi^i_{in}$ is \s{true} and the controller can only know it at time $j+i$, the output should always be \s{true} from time $0$ to $j+i-1$, in order to satisfy the formula. 
	

	\begin{list1}
		\item From cycle $0$ to $i-1$, the output is always $\s{true}$. 
		Consider the monitor component. Within cycle $0$ to $i-1$, it produces \s{false}, due to ``$\bigwedge_{z=0 \ldots i-1} \x^z \neg \sig{out}$'' in realizing the specification $\G (\chi^i_{in} \leftrightarrow \x^{i} \sig{out}) \wedge \bigwedge_{z=0 \ldots i-1} \x^z \neg \sig{out}$. As all monitors are connected to an OR-gate, the output of produced by the OR-gate is \s{false} from cycle 0 to $i-1$. As the output of the OR-gate  is connected to InUB which turns ``-'' only after it receives $\s{true}$, the output of InUB is \s{true} from 0 to $i-1$. Thus, $\boxed{v_{out}}$ is updated with value $\s{true}$ from 0 to $i-1$.
		
		\item From time $i$ to $j+i-1$, due to ``$\G (\chi^i_{in} \leftrightarrow \x^{i} \sig{out})$'' in realizing the specification $\G (\chi^i_{in} \leftrightarrow \x^{i} \sig{out}) \wedge \bigwedge_{z=0 \ldots i-1} \x^z \neg \sig{out}$, as $\chi^i_{in}$ first holds at $j$, so $\s{out}$ first holds at time $j+i$, meaning that before time $j+i$, the monitor produces \s{false}. As all monitor produces \s{false} before time $j+i$, the input to the InUB is \s{false}. Thus, the output produced by InUB is $\s{true}$. Therefore, before time $j+i$, $\boxed{v_{out}}$ is always updated with value $\s{true}$, making the strong-until condition hold.

	\end{list1}

	\item [($\vee $, right)] Assume $\not\exists j$ such that $\varphi^{i}_{in}$ holds at $j$,  we prove that an  implementation following the construction guarantees that   $v_{out}$ remains $\s{true}$, thereby satisfying the Global part (\G). 
	
	\begin{list1}
		
		\item For time 0 to $i-1$, the monitor component generates \s{false}, due to ``$\bigwedge_{z=0 \ldots i-1} \x^z \neg \sig{out}$'' in realizing the specification $\G (\chi^i_{in} \leftrightarrow \x^{i} \sig{out}) \wedge \bigwedge_{z=0 \ldots i-1} \x^z \neg \sig{out}$. 
		\item For time $i$ onwards, the output of a monitor is governed by whether $\chi^i_{in}$ is \s{true} or \s{false}. As $\not\exists j$ such that $\varphi^{i}_{in}$ holds at $j$, and $\chi^i_{in}$ is a formula in the DNF of $\varphi^{i}_{in}$, $\not\exists j$ such that $\chi^{i}_{in}$ holds at $j$, thus the output of each monitor is always \s{false}. 
		
		\item For InUB, it produces ``-'' after receiving a \s{true} in its input port. Before that, it produces $\s{true}$ in its output port. As InUB never receives input with value $\s{true}$, the generated output value is always $\s{true}$. Due to the dataflow, $v_{out}$ is always $\s{true}$. Thus $\G v_{out}$ holds.
	\end{list1}
\end{description}

\paragraph{(B, Specification Type~2)} (Informal sketch; to ease understanding) The proof strategy is similar. Essentially the property can be viewed as Type~3 where the right-hand part of the implication is replaced/nested by a Type~1 specification. When the triggering of left-hand part (the \s{input} part) appears at time $t$, based on the formula one needs to ``start'' the monitor which constitutes the \s{release} part. As ``starting a monitor'' is not possible, an alternative is to perform proper reset such that it restart like new. 
That is, for monitoring \s{release} formula $\chi^h_{in}$, if at time $t$ where the triggering event $\phi_{in}$ turns \s{true}, let first~$h$ output to be \s{false}), such that the TrUB.\s{release} is not wrongly polluted by the first $h-1$ values provided by the release monitor during time $t, t+1, \ldots, t+h-1$. 
As our monitor is designed not to take reset signals (this is to facilitate monitor reuse among multiple specifications), one can alternatively achieve the same effect by the introduction of~$\C_{\Theta_h}$, which contains the mechanism to set consecutive $h$ outputs to be \s{false}, whenever its input port \s{set} receives value \s{true}.

\section{General Properties for \GXW Synthesis} \label{appendix.sec.completeness} 

As each individual specification of  \{P1, P2, P3, P4\} is trivially realizable, the reason that lead to unrealizability is (1) simultaneous \s{true} and \s{false} demanded by different sub-specifications, (2) violation of properties over output variables (type P5), which are invariance properties over output variables. Notice that our method, as it generates structured controller, can also report \s{unknown} when there exists a feedback loop in the constructed system, i.e., when output variables listed in the \s{release} part of~P2 create a need for simultaneous reasoning over two or more output variables. It is only a restriction imposed on the controller structure and is not the reason for unrealizability. 

Therefore, as unrealizability of a \GXW specification is due to (1) and (2), one can construct a counter strategy by first constructing a tree which provides input assignments that lead to undesired states violating (1) or (2), then all tree leaves violating (1) or (2) are connected to a self-looped final state, in order to accept $\omega$-words created by inputs and outputs. As the \s{input} part listed in Table~\ref{table.specification.skeleton} does not involve any output variable, a counter-strategy, if exists, can lead to violation of (1) or (2) within $\Omega$ cycles, where $\Omega$ is a number sufficient to let each \s{input} part of the sub-specification be \s{true} in a run.

\setcounter{lemma}{0}
\begin{lemma}
	\normalfont
	For \GXW specification $\varrho \rightarrow \bigwedge_{m=1\ldots k} \eta_m$, if (a) $\rho^0_{out}$ is \s{false} for all $\eta_m$ of type~P2 and (b) no specification of type P5 exists, then if the specification is not realizable, then there exists a counter-strategy which leads to violation of (1) or (2) in $\Omega$ steps, where $\Omega$ is bounded by the sum of  (i) the number of specifications $k$, and (ii)  the sum of all $i$ value defined within each $\phi^i_{in}$ of $\eta_m$.
\end{lemma}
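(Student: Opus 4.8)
The plan is to exploit the fact that, under the two hypotheses, the output demands imposed by the sub-specifications are functions of the input word \emph{alone}; this collapses the environment counter-strategy from a branching tree into a single input sequence and makes its length easy to bound. First I would establish this \emph{decoupling property}. By definition every trigger $\phi^i_{in}$ is a formula over $V_{in}$ only. For a P1 conjunct the weak-until stop condition \emph{is} its trigger $\phi^i_{in}$, hence input-only; for a P2 conjunct the release is $\varphi^{j}_{in} \vee \rho^0_{out}$, and since hypothesis~(a) forces $\rho^0_{out}$ to be \s{false}, it reduces to $\varphi^{j}_{in}$, again input-only; P3 and P4 impose only point demands keyed to $\phi^i_{in}$; and hypothesis~(b) removes all P5 invariance demands. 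Consequently the lock/demand state of every \emph{trigger-until}, \emph{initial-until}, and monitor actor evolves as a function of the input prefix, and no output value ever feeds back into it. It follows that whether condition~(1) — two sub-specifications demanding \s{true} and \s{false} on one output in the same cycle — occurs is determined by the input prefix, independently of the controller's choices. Hence the \GXW specification is unrealizable \emph{iff} some input word forces such a clash, and the environment's strategy need not branch on controller moves: the same word works against every controller, so the tree degenerates to a path.

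Next I would bound the length of this path by $\Omega = \sum_{m}(i_m+1) = k + \sum_m i_m$. The truth of $\phi^{i_m}_{in}$ at one evaluation point is fixed by an input window of $i_m+1$ consecutive valuations, so the environment can fire every trigger once using $k$ non-overlapping windows of these sizes, i.e.\ $\Omega$ cycles in total. A fired P1 or P2 trigger produces a \emph{sticky} demand on $\varrho_{out}$ that persists on an interval until its input-controlled release occurs, which the environment withholds by simply not re-satisfying $\phi^i_{in}$ or $\varphi^{j}_{in}$; a fired P3 or P4 trigger produces a demand at the fixed offset $i$ after its trigger. Scheduling the windows so that all sticky intervals are left open and the point demands are aligned to a common cycle drives the system into a configuration exhibiting every activatable demand simultaneously. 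Since condition~(1) is by definition a clash of \s{true} and \s{false} at two input ports of a single resolution actor, two sub-specifications already witness unrealizability, so this ``fire-everything'' schedule of length $\leq \Omega$ is more than sufficient.

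The hard part will be justifying that this maximal-activation schedule is without loss of generality, i.e.\ that no forceable conflict requires a longer or more intricate input word. Two points need care. First, the input pattern used to fire one trigger $\phi^{i_m}_{in}$ might incidentally satisfy another conjunct's release $\varphi^{j}_{in}$ and prematurely close its demand interval; the argument must show the windows can be arranged (or the offending releases left unsatisfied) so that every demand needed for the clash stays open, again leaning on hypothesis~(a) to keep releases purely input-driven and hence freely schedulable by the environment. Second, P2 demands are \emph{recurrent}, so one must verify that re-triggering after a release never reaches a conflicting configuration the single pass misses; here the stickiness of demands together with input-only releases shows that the set of simultaneously realizable demands is already maximised by the first firing of each trigger, so replaying triggers adds nothing. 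Establishing these two facts closes the bound, and — as anticipated in the discussion following the lemma — a variant of Step~\ref{step.3} that unrolls the computation $\Omega$ times then decides realizability under hypotheses~(a) and~(b).
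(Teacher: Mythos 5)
Your proposal follows essentially the same route as the paper's proof: observe that hypotheses (a) and (b) make all lock/release demands functions of the input alone (so the counter-strategy collapses to a single input word), then fire each trigger $\phi^{i_m}_{in}$ in its own window of $i_m+1$ cycles while withholding the input-only releases, yielding the bound $\Omega=\sum_m(i_m+1)$. The ``hard part'' you flag --- that the window for one trigger might incidentally satisfy another conjunct's release --- is handled in the paper only by noting that the two orderings of a conflicting pair can be tried and both fit within the bound, so your write-up matches the paper's argument and its level of rigor.
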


\noindent (Note) With the constraint where $\rho^0_{out}$ is always \s{false}, it is impossible to create feedback loops during the construction, as feedback loops are created due to the connecting output to the monitoring subsystem which corresponds to the \s{release} part of a formula. 

\begin{proof}
	
	When $\rho^0_{out}$ is \s{false} for all $\eta_m$ of type~P2, then for type~P2 specification, locking the output to $\varrho_{out}$ and the release of the locking is completely determined by two input events $\phi^i_{in}$ and $\varphi^{j}_{in}$. Similarly, for specifications of type P1, P3, P4, whether output is locked to $\varrho_{out}$ is decided by $\phi^i_{in}$. 
	
	Consider the simplest case with two specifications $\tau_1 := \G (\phi^{i_1}_{in,1} \rightarrow \x^{i_1}  (\varrho_{out,1}\,\CU\, \varphi^{j_1}_{in,1}))$ and $\tau_2 := \G (\phi^{i_2}_{in,2} \rightarrow \x^{i_2}  (\varrho_{out,2}\,\CU\, \varphi^{j_2}_{in,2}))$. Let $\varrho_{out,1}$ be $\s{out1}$ and $\varrho_{out,2}$ be  $\neg\s{out1}$. Thus these two specifications can lead to conflicts ($\tau_1$ demanding $\s{out1}$ to \s{true} while $\tau_2$ demanding $\s{out1}$ to \s{false}).  Consider a finite time window of size $(i_1+1)+(i_2+1)$. It is sufficiently large to first (from time 0 to~$i_1$) make $\phi^{i_1}_{in,1}$ \s{true}, and subsequently (from time $i_1+1$ to $i_1+1+i_2$), let $\phi^{i_2}_{in,2}$ be $\s{true}$.
	Then if $\varphi^{j_1}_{in,1}$ is always \s{false} in between (from time $i_1$ to $i_1+1+i_2$), a counter strategy is created within length $(i_1+1)+(i_2+1)$. The overall concept is demonstrated in Figure~\ref{fig:TimeLine}. 

	\begin{figure}[h]
		\centering
		\includegraphics[width=0.5\columnwidth]{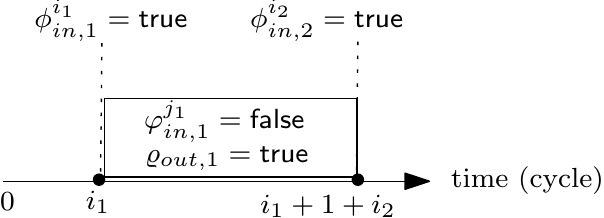}	
		\caption{A timeline describing the time windows required to produce conflict at time $i_1 + 1 + i_2$.}	
		\label{fig:TimeLine}
	\end{figure}

	For the case in Figure~\ref{fig:TimeLine}, the sufficient and necessary  condition for producing conflict is to keep  $\varphi^{j_1}_{in,1}$ always \s{false} from time $i_1$ to $i_1+1+i_2$. 
	That is, if within time $[i_1, i_1+1+i_2]$, when the environment provides a sequence of inputs to make $ \phi^{i_2}_{in,2}$ \s{true}, the sequence will also make $\varphi^{j_1}_{in,1}$ \s{true}, then it is impossible to create a counter strategy. Similarly, one can reverse the ordering of events by first making $\phi^{i_2}_{in,2}$ \s{true} followed by making $\phi^{i_1}_{in,1}$ \s{true}; the  sufficient and necessary   condition for producing conflict is to keep  $\varphi^{j_2}_{in,2}$ to \s{false} from time $i_2$ to $i_2+1+i_1$. In both cases, a time horizon $(i_1+1)+(i_2+1)$ is sufficient to demonstrate the existence of a counter-strategy. 
	
	One can also replace $\tau_1$ and $\tau_2$ by any specification pattern in $\{P1, P2, P3, P4\}$, and the bound $(i_1+1)+(i_2+1)$ is still sufficient. Lastly, by generalizing the result to $k$ specifications, we have derived the bound  $(i_1+1)+(i_2+1)+\ldots +(i_k+1)$.
	
	
\end{proof}

Lemma~\ref{lemma.bound.no.output} is based on the premise where no specification  is of type~P5, while \s{input} and \s{release} parts in Table~\ref{table.specification.skeleton} are only controlled by input variables. When no constraints are imposed to output variables due to input events, an implementation can freely select output variable assignments, as it can neither influence \s{release} nor violate properties type~P5. Therefore, when checking the existence of a counter-strategy, there is no need to use quantifier alternation, thereby making the synthesis problem easier. 


\begin{lemma}
	\normalfont
	For a \GXW specification under constraint of Lemma~\ref{lemma.bound.no.output}, deciding whether the specification is realizable is in \s{co-NP}. 	
\end{lemma}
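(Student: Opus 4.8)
The plan is to establish membership in \s{co-NP} by placing the complementary problem---deciding \emph{unrealizability}---in \s{NP}. The guessed certificate will be a single finite input word, and the heart of the argument is to show that, under hypotheses (a) and~(b) of Lemma~\ref{lemma.bound.no.output}, a counter-strategy need not branch on the controller's outputs, so that it collapses to a non-adaptive input sequence of polynomial length. First I would fix what unrealizability can mean here. Since no specification of type~P5 is present, condition~(2) (violation of an output invariance) can never arise, so the only possible source of unrealizability is condition~(1): two sub-specifications simultaneously demanding \s{true} and \s{false} on a common output variable. The key structural observation is that, because $\rho^0_{out}$ is \s{false} for every type-P2 formula, the triggering subformula $\phi^i_{in}$ and the release subformula $\varphi^{j}_{in}$ of each $\eta_m$ refer only to input variables; hence, for every sub-specification and every time step, whether $\eta_m$ demands its output literal $\varrho_{out}$ to be \s{true}, demands it to be \s{false}, or leaves it free is a function of the input prefix alone and is independent of any output the controller has produced. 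This is precisely the ``removal of quantifier alternation'' referred to before the lemma.

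From this observation it follows that the specification is realizable if and only if no input sequence ever forces two sub-specifications into conflicting demands on the same output variable: when the demanded values are pairwise consistent the controller may simply emit them (choosing arbitrarily on unconstrained variables), whereas a conflicting pair of demands cannot be met by any controller. Consequently unrealizability is equivalent to the existence of an input word that produces a condition-(1) conflict, and by Lemma~\ref{lemma.bound.no.output} such a word, if it exists, may be taken to have length at most $\Omega$, which is linear in the size of the specification.

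The resulting \s{NP} algorithm for unrealizability is then straightforward: nondeterministically guess an input word $a_0 a_1 \ldots a_{\Omega-1}$ over $\mathbf{2}^{V_{in}}$, which has polynomial size because $\Omega$ is linear and each $a_t$ is one assignment to $V_{in}$; then, for every $\eta_m$ and every time $t \leq \Omega$, compute deterministically from the input prefix whether $\eta_m$ demands its output literal \s{true}, \s{false}, or free (a direct evaluation of the monitor behaviour of Algorithm~\ref{algo.synthesis.monitor}, taking time polynomial in the formula size); finally accept iff some output variable is demanded both \s{true} and \s{false} at a common time $t$. Both the guessing and the verification are polynomial, so unrealizability lies in \s{NP} and therefore realizability lies in \s{co-NP}.

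The main obstacle I anticipate is justifying that a fixed input word, rather than a genuine environment strategy that reacts to the controller's outputs, suffices as a counter-witness. Lemma~\ref{lemma.bound.no.output} supplies only the length bound; the collapse from a branching counter-strategy to a non-adaptive input sequence must be argued separately from hypotheses (a) and~(b), namely that outputs influence neither the release conditions nor any type-P5 invariant. I would make this independence explicit---establishing that the ``demand profile'' of each sub-specification is determined by inputs alone---before invoking the bound, since it is exactly this step that reduces the two-player realizability question to a one-player reachability check over input words.
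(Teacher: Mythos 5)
Your proof is correct and follows essentially the same route as the paper's: part~(A) of the paper's argument is your observation that the demand on each output is a function of the input prefix alone (so quantifier alternation disappears), part~(B) is your guess-and-check of an input word of length $\Omega$ using the bound from Lemma~\ref{lemma.bound.no.output}, and part~(C) is the final dualization to \textsf{co-NP}. The only difference is presentational --- the paper verifies the guessed word by unrolling the constructed actor network with the resolution parameters fixed to \textsf{true}, whereas you evaluate the demand profiles directly from the formulas --- which does not change the substance of the argument.
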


\begin{proof}
	
	\vspace{1mm}
	
	\noindent \textbf{(A)} We first argue that the construction after Step~\ref{step.1},~\ref{step.2.p123}, and~\ref{step.2.p4}, whether an input of a resolution block is to \s{true} or \s{false} (i.e., not "\textsf{\textendash}") is only controlled by input events. 
	
	\begin{list1}
		
		\item For specification P3, the high-level controller outputs $\s{\textendash}$ once when its input receives a \s{false}, but monitor component only sends \s{true} when $\phi^i_{in}$ turns $\s{true}$. Therefore, an input of a resolution block is to \s{true} or \s{false} (i.e., not "$\s{\textendash}$") only when needed. 
		
		\item For specification P1, the high-level controller continuously outputs $\s{\textendash}$ once when its input receives a \s{true}, and before that, it outputs \s{true}. But monitor component only sends \s{true} when $\phi^i_{in}$ turns $\s{true}$, and before that (from 0 to $i-1$), the monitor sends \s{false}.

		\item Specification P2 is a combination of P3 and P1. 
		
		\item For specification P4, the monitor component is connected directly to Resolution Actor; it only sends \s{true} or \s{false} when $\phi^i_{in}$ turns $\s{true}$ or \s{false} and before that (from 0 to $i-1$), the monitor component sends $\s{\textendash}$ (\s{dc} = \s{true}). 
		
	\end{list1}	
	
	\vspace{1mm}
	
	\noindent \textbf{(B)} Using the result in Lemma~\ref{lemma.bound.no.output}, one can perform a bounded unroll over the generated actor system from Step~\ref{step.1},~\ref{step.2.p123}, and~\ref{step.2.p4}, in order to check if there exists a counter-strategy. This is because for the construction guarantees (A), therefore, whenever a counter-example is demonstrated, it is completely due to the sequence of input events. 
	The unroll algorithm is stated in Step~\ref{step.bounded.unroll}; it is similar to Step~\ref{step.3}, with the slight variation to encode the initial condition (line 21-23) and to encode the transition of $\Omega$ steps using index $0,\ldots,\Omega$ (line~5 for using~$\alpha$ to iterate over $0,\ldots,\Omega$). 
	
	As for each output variable $v_{out}$, the existential variable $Res_{v_{out}}.\s{A}$ can neither influence \s{release} nor violate properties type~P5 (due to the restriction stated in Lemma~\ref{lemma.bound.no.output}), one can simply set it to~\s{true}. Therefore, the constraint system with one quantifier alternation is simplified to checking the validity of a quantifier-free Boolean formula, or equivalently, checking the existence of an assignment to make the quantifier-free formula \s{false}.


	As running one cycle requires time linear to the number of actors and wires, which is bounded by length of the complete specification formula, deciding whether the specification is unrealizable is in \s{co-NP}.  	
	
	\vspace{1mm}	
	
	\noindent \textbf{(C)}  The result follows the fundamental property of LTL synthesis - an LTL specification is either realizable or unrealizable. As deciding whether the specification is unrealizable is in \s{co-NP}, the dual problem of  whether the specification is realizable can be decided in time \s{co-NP}.

	\vspace{1mm}
	\noindent \textbf{(Remark)} We can also analyze the number of variables and the number of clauses created in each unroll. Those numbers are timed with $\Omega=(i_1+1)+(i_2+1)+\ldots +(i_k+1)$ to derive the total number of variables and clauses. 
	
	\begin{list1}
		\item 	For all global input and output ports, they are encoded as variables. 
		
		\item For each resolution block, it takes at most $k$ inputs, so at most $2k$ variable is needed (a factor of $2$ is due to the use of $\s{\textendash}$). The output of a resolution block can be syntactically replaced by the corresponding global output port.
		
		\item 	For each specification $\eta_m$:
		\begin{itemize}
			\item Implementing each $\chi^i_{in}$ of  $\phi^i_{in}$  uses variables of size $i$ times the number of input variables in $\chi^i_{in}$. For type~P4, one adds a counter for counting~$i$ steps. Thus for $\eta_m$, one at most uses $i_m|V_{in}|+log_2(i_m)$ variables. 
			
			\item Similar estimation holds as an upperbound for the \s{release} part of specification type P2, but there is a need to add state variable of $\Theta_j$. Thus a conservative estimation is $j_m|V_{in}|+2\times log_2(j_m)$.
			
			\item A high-level control block uses at most two state variables, and has at most four ports. Each port is modeled as a variable. 
			
			\item All input ports of event monitor (for monitoring $\chi^i_{in}$) can be syntactically replaced by global input ports. All inputs of a OR-gate can be syntactically replaced by the output port of an event monitor or global output ports. The output of a OR-gate can be syntactically replaced by the input port of a high-level block. 
			
		\end{itemize}

	\end{list1}
	
	Therefore, the total number of variables is bounded by 
	
	\[\Omega(|V_{in}|+|V_{out}|+\sum _{m=1\ldots k} (2k|V_{out}|+(i_m|V_{in}|+log_2(i_m))+(j_m|V_{in}|+2\times log_2(j_m))+6)) \]
	
	The number of constraints created in each unroll can be understood by how data is flowed from source to destination, which follows the ordering (here we omit the logic gates): external input ports $\Rightarrow$  monitor controllers $\Rightarrow$ skeleton controllers $\Rightarrow$ resolution actors $\Rightarrow$ external output ports. Evaluating constraints in single cycle takes time linear to the number of actors, and evaluating  $\Omega = (i_1+1)+(i_2+1)+\ldots +(i_k+1)$ rounds takes polynomial time. Therefore, given an assignment over all variables, deciding whether the formula is violated is done in polynomial time.

	
\end{proof}

\noindent The following result shows that, the bound is still valid even without the above mentioned restriction. 

\begin{lemma}
	\normalfont
	For \GXW specification $\varrho \rightarrow \bigwedge_{m=1\ldots k} \eta_m$,   if the specification is not realizable, then there exists a counter-strategy which leads to violation of (1) or (2) in $\Omega$ steps, where $\Omega$ is bounded by the sum of  (i) the number of specifications $k$, and (ii)  the sum of all $i$ value defined within each $\phi^i_{in}$ of $\eta_m$.
	
\end{lemma}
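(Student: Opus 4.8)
The plan is to reuse the counter-strategy construction from the proof of Lemma~\ref{lemma.bound.no.output}, dropping the two restrictions (a) that $\rho^0_{out}$ is \s{false} for every type~P2 formula and (b) that no type~P5 formula occurs, and to argue that neither relaxation enlarges the window $\Omega$ needed to expose a conflict. The whole argument rests on one invariant: in every pattern~P1--P4 the triggering antecedent $\phi^i_{in}$ (the \s{input} part of Table~\ref{table.specification.skeleton}) mentions only input variables. Hence the onset of each lock over $\varrho_{out}$ is decided purely by the input sequence and is entirely under the environment's control, independently of how the controller assigns outputs.

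First I would recall the window construction. As in Lemma~\ref{lemma.bound.no.output}, the environment schedules the triggering events $\phi^{i_m}_{in}$ of the (finitely many) conflicting formulae one after another, each occupying at most $i_m+1$ consecutive cycles, so that after $\Omega=\sum_{m}(i_m+1)$ cycles every relevant lock has been activated. Because the antecedents are input-only, this schedule is a fixed finite sequence of input valuations that the environment can play verbatim, and ---crucially--- it does not depend on the controller's responses; thus the same window $\Omega$ that worked in the restricted case is still attainable in the general case.

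Next I would upgrade the (linear) counter-strategy of Lemma~\ref{lemma.bound.no.output} to a finite branching tree of depth $\le \Omega$. At each level the environment fixes the next triggering input, while the tree branches over the controller's possible output assignments; the input-release disjuncts $\varphi^{j}_{in}$ are kept \s{false} throughout, as before. Type~P5 invariances and output-dependent releases $\rho^0_{out}$ affect only the consequent (output) side of a formula, so they change which output assignments count as a conflict but not when the locks fire. Concretely, a leaf is declared winning if it witnesses either condition~(1) --- two active locks demanding the same $v_{out}$ to be \s{true} and \s{false} --- or condition~(2) --- a forced output assignment violating some type~P5 formula; every such leaf is then wired to a self-looping accepting sink so that the tree accepts genuine $\omega$-words. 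Since every lock is active by cycle $\Omega$ and the environment never needs a further input event, any conflict the controller cannot avoid must already surface on some branch at a cycle $\le\Omega$; otherwise the controller would possess an output assignment consistent on all branches up to and beyond $\Omega$, contradicting unrealizability.

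The main obstacle is exactly the output-dependent release $\rho^0_{out}$, which couples a lock's lifetime to the controller's own outputs and is the source of the feedback that our construction otherwise forbids. Here the controller may try to discharge a P2 lock by \emph{setting} an output so as to satisfy $\rho^0_{out}$, thereby removing one demand. I would handle this by observing that such a move is itself an output decision captured by a branch of the tree: either the output value chosen to release the lock is consistent with all other active demands --- in which case that demand simply disappears without help from any new input --- or it clashes with another lock (condition~1) or with a type~P5 invariant (condition~2), which is again a conflict at a cycle $\le\Omega$. In neither case does escaping a lock require the environment to introduce a fresh triggering event beyond the already-scheduled $\Omega$ cycles, so feedback can redistribute but never postpone a \emph{forced} conflict past the window. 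This establishes that the $\Omega$ bound of Lemma~\ref{lemma.bound.no.output} survives the removal of restrictions~(a) and~(b), as claimed.
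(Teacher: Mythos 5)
Your proposal is correct and follows essentially the same route as the paper: you reuse the $\Omega$-window of Lemma~1 by observing that the triggering antecedents $\phi^i_{in}$ are input-only, and you handle the dropped restrictions by noting that at each cycle the controller's output choice either satisfies $\rho^0_{out}$ (releasing a lock with no new input event needed) or produces a conflict of type (1) or (2) within the window. The paper phrases this as an ``imaginary SAT solver'' with independent per-cycle opportunities to release (since no \x is attached to output variables, output effects do not persist across cycles), which is the same observation underlying your branching-tree argument.
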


\begin{proof}
	
	For general \GXW, the effect of enabling output variable to certain value within a given time point $\alpha$ is not carried to time point $\alpha+1$, as no \x is bundled with any output variable in the specification. 
	
	We again refer readers to Figure~\ref{fig:TimeLine}. Now, view the control of output variables being governed by an imaginary SAT solver. In each time point,  the produced output assignments should satisfy invariance properties of P5, while being constrained by the locking condition.	When possible, it tries to produce output assignment that turns $\rho^0_{out}$ to \s{true}, such that the output is no longer constrained to $\varrho_{out}$. For example in Figure~\ref{fig:TimeLine}, starting from time $i_1$, the SAT solver has independently $i_2+1$ opportunities to make the $\s{release}$ part to \s{true}. If it succeeds, then conflict does not appear. Otherwise, counter strategy is produced latest at time $i_1+1+i_2$.
	
\end{proof}

\begin{lemma}
	\normalfont
	For a \GXW specification, deciding whether the specification is realizable is in \s{PSPACE}.  
\end{lemma}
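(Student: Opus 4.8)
The plan is to reduce the realizability question to solving a finite two-player game of polynomial horizon, and then to evaluate this game in polynomial space by an alternating depth-first search. First I would invoke Lemma~\ref{lemma.bound.output}: since any counter-strategy against a non-realizable \GXW specification forces a violation of (1) or (2) within $\Omega$ steps, and $\Omega$ is bounded by $k + \sum_{m} i_m$ (hence linear in the size of $\varrho \rightarrow \bigwedge_{m} \eta_m$, because the operators $\x^{i}$ are spelled out in the formula), realizability is equivalent to the controller being able to avoid conditions (1) and (2) throughout an $\Omega$-round game. Each round consists of the environment choosing an input assignment in $\mathbf{2}^{V_{in}}$, followed by the controller choosing an output assignment in $\mathbf{2}^{V_{out}}$; the controller wins iff no conflict (simultaneous demand for \s{true} and \s{false} on some output) and no type-P5 invariance violation is ever produced. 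Completeness of this reduction (non-realizable $\Rightarrow$ controller loses the finite game) is exactly Lemma~\ref{lemma.bound.output}; soundness follows because the only sources of unrealizability are (1) and (2), whose triggering depends only on input events that have all had a chance to fire within $\Omega$ steps, so surviving the horizon certifies that the controller can continue indefinitely.

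Next I would bound the game state. To decide (1) and (2) in each round it suffices to remember, for every sub-specification $\eta_m$, the last $i_m$ (respectively $j_m$) input valuations needed to evaluate the guards $\phi^i_{in}$ and $\varphi^j_{in}$, together with the finitely many locking booleans of the high-level behaviour. This per-specification memory is of size $O(i_m \cdot |V_{in}|)$, so the whole game configuration, plus a round counter up to $\Omega$, is describable in space polynomial in the specification. Crucially, in the general case the controller's output choices genuinely matter --- they feed back into the release guards $\rho^0_{out}$ of P2 and may trip the invariants of P5 --- so the game is truly alternating and cannot be collapsed to the single-quantifier (\s{co-NP}) situation of Lemma~\ref{lemma.coNP}.

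Finally I would evaluate the resulting game tree by a standard min--max (AND--OR) depth-first recursion. The recursion depth equals the number of moves, which is $2\Omega$ and hence polynomial; each stack frame stores the current configuration, a counter enumerating the at most $2^{|V_{in}|}$ or $2^{|V_{out}|}$ successors one at a time, and the Boolean aggregate computed so far (universal/AND at environment nodes, existential/OR at controller nodes). Since each frame is of polynomial size and there are polynomially many simultaneously live frames, the whole search runs in polynomial space, establishing that deciding realizability is in \s{PSPACE}. The main obstacle is not the recursion itself but the soundness of the reduction: arguing rigorously that a controller surviving the bounded horizon extends to an infinite winning strategy, which rests on showing that after the $\Omega$-step window every input guard has been exercised so that no fresh conflict can be introduced later --- this is precisely where the structural bound of Lemma~\ref{lemma.bound.output} does the essential work.
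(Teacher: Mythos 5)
Your proposal is correct, and it shares the paper's essential skeleton --- both arguments stand entirely on the horizon bound $\Omega$ of Lemma~\ref{lemma.bound.output} and both acknowledge that the remaining work is a space-bounded search over an $\Omega$-step unrolling --- but the final complexity step is routed differently. The paper attacks \emph{un}realizability: it nondeterministically guesses a single oblivious input sequence of length $\Omega$ (legitimate because the \s{input} and \s{release} guards that trigger and release locks are built from input variables, so the environment never needs to react to the controller's outputs), then verifies in polynomial space that \emph{every} output response to that fixed sequence produces a conflict or a P5 violation; this places unrealizability in \s{NPSPACE}, and the conclusion follows from Savitch's theorem plus complementation. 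You instead keep the full alternation and evaluate the $2\Omega$-move AND--OR game tree by depth-first recursion, which is the standard $\s{APTIME}=\s{PSPACE}$ argument. Your route buys two things: it does not depend on the obliviousness of the environment's counter-strategy (so it would survive generalizations where the guards mention outputs), and it avoids the detour through \s{NPSPACE} and Savitch. The paper's route buys a slightly sharper structural statement --- that a fixed input word already witnesses unrealizability --- which is what its Lemma~\ref{lemma.bound.output} actually constructs. Your identification of the one genuinely delicate point (that surviving the $\Omega$-round game certifies an infinite winning strategy, which is exactly the content of Lemma~\ref{lemma.bound.output} and is no more rigorously discharged in the paper than in your sketch) is accurate, and your bookkeeping of the game configuration (the last $i_m$, resp.\ $j_m$, input valuations per sub-specification plus the locking bits and a round counter) matches the polynomial-space accounting the paper performs via its $\Omega\,|V_{out}|$ copies of the output variables.
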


\begin{proof}
	
	We check if the specification is not realizable $\Sys$ by the following: non-deterministically provide input variable assignments for $(i_1+1)+(i_2+1)+\ldots +(i_k+1)$ times, and check for all output assignments for $(i_1+1)+(i_2+1)+\ldots +(i_k+1)$ rounds, it is possible to violate the specification. Checking whether it is possible to violate the specification can be done in \s{PSPACE}: The process is similar to the above unroll case, but for  each variable $v_{out}$, instead of setting $Res_{v_{out}}.\s{A}$ to $\s{true}$, 
	we simply let output variable to process
	$(i_1+1)+(i_2+1)+\ldots +(i_k+1)$ different copies, meaning that one can freely select the value in each round. The total memory used is $((i_1+1)+(i_2+1)+\ldots +(i_k+1)) |V_{out}|$, which is polynomial to the problem size. 
	
	Then given input assignment for $(i_1+1)+(i_2+1)+\ldots +(i_k+1)$ rounds, one can use the memory to check if for all possible output variable assignments of 
	they all unfortunately lead to conflict. If so, then report that the specification is not realizable. Therefore, deciding whether the specification is unrealizable is in \s{NPSPACE} (non-determinisstic input assignment + \s{PSPACE} complexity for checking if conflict appears).
	
	As \s{NPSPACE} = \s{DPSPACE} = \s{PSPACE}, and an LTL specification is either realizable or unrealizable,  deciding whether the specification is realizable is in \s{PSPACE}.  
\end{proof}

\vspace{2mm}
\noindent Notice that although the complexity for checking if a $\GXW$ specification is realizable is in \s{PSPACE}, the algorithm presented previously only sets every $Res_{v_{out}}.\s{A}$ as a constant that does not change over time. This creates a simpler structure for the implemented controller. Soundness is still guaranteed by performing an unroll to the above mentioned bound.

\begin{ProcessStep}
	\begin{small}
		\SetKwInOut{Input}{Input}
		\SetKwInOut{Output}{Output}
		\Input{LTL specification $\phi =  \varrho \rightarrow \bigwedge_{m=1\ldots k} \eta_m$, input variables $V_{in}$, output variables $V_{out}$, partial controller implementation $\Sys= (V_{in}, V_{out}, Act, \tau$) with unknown parameters, integer unroll bound $\Omega$ } 
		\Output{2QBF constraint $(V_{\exists}, V_{\forall}, \Upsilon)$, where $V_{\exists}$ and $V_{\forall}$ are sets of Boolean variables, and $\Upsilon$ is a quantifier free constraint over variables in $V_{\exists}\cup  V_{\forall}$
		}

		\textbf{let} $\Upsilon_{a}, \Upsilon_{g} := \s{true}$\;
		\textbf{let} $V_{\exists}, V_{\forall}$ := \textsf{NewEmptySet}();	
		
		\lForEach{$v_{out} \in V_{out}$}{
			$V_{\exists} := V_{\exists} \cup \{ Res_{v_{out}}.\s{A}  \}$ 
		} 
		

		\textbf{let} $\Xi$ be the evaluation ordering of $\Sys$ \;
		\For{$\alpha = 0 \ldots \Omega$}{
			
			\ForEach{$\xi \in \Xi$}{
				
				\If{$\xi \in \tau$}{
					\tcc{$\xi$ is w wire; encode biimplication among two ports }
					Let $\xi$ be $(\s{source} \dashrightarrow \s{dest})$\;
					$V_{\forall}.add(\s{source}_{\alpha})$\;
					$V_{\forall}.add(\s{dest}_{\alpha})$\;	
					$\Upsilon_{a} := \Upsilon_{a} \wedge (\s{source}_{\alpha} \leftrightarrow \s{dest}_{\alpha})$\;	
				} \Else{
				\tcc{$\xi$ is an actor; encode transition  using index $\alpha$ and $\alpha+1$}
				$V_{\forall}.add(\textsf{GetStateVariable}(\xi, \alpha))$\;	
				$\Upsilon_{a} := \Upsilon_{a} \wedge (\textsf{EncodeTransition}(\xi, \alpha) )$	
				
			}
			
		}

		$\Upsilon_{a} := \Upsilon_{a} \wedge \bigwedge \textsf{VariableReplace}(\varrho, \alpha)$\;

		\For{$v_{out} \in V_{out}$}{	
			$\Upsilon_{g} := \Upsilon_{g} \wedge (\not\exists i, j: (Res_{v_{out},\alpha}.\s{input}_{i}=\s{true}) \wedge (Res_{v_{out},\alpha}.\s{input}_{j} = \s{false}))$\;	
		}
		
		\ForEach{$\eta_m$, $m=1\ldots k$}{		
			$p$ := \textsf{DetectPattern}($\eta_m$)\;
			
			\lIf{$p \in \{P5\}$}{
				$\Upsilon_{g} := \Upsilon_{g} \wedge \textsf{VariableReplace}(\eta_m, \alpha)$
			}
		}

	}

	\tcc{Add initial condition, to achieve bounded unroll}
	\ForEach{$\xi \in \Xi$}{
		\If{$\xi \not\in \tau$}{
			\tcc{$\xi$ is an actor; encode initial state with index $0$}
			$\Upsilon_{a} := \Upsilon_{a} \wedge (\textsf{EncodeInitialState}(\xi, 0) )$	
		}
	}

	\textbf{return} ($V_{\exists}, V_{\forall},  \Upsilon_{a} \rightarrow \Upsilon_{g}$)

	\caption{Generating 2QBF constraints for bounded unroll}
	\label{step.bounded.unroll}
\end{small}
\end{ProcessStep}

\end{document}